\documentclass{article}

\usepackage[margin=1in]{geometry}

\usepackage{microtype}
\usepackage{graphicx}
\usepackage{subfigure}
\usepackage{booktabs} 
\usepackage[numbers, sort, comma, square]{natbib}
\usepackage[colorlinks=true, allcolors=blue]{hyperref}

\usepackage{amsmath}
\usepackage{amssymb}
\usepackage{mathtools}
\usepackage{amsthm}
\usepackage{thm-restate}

\theoremstyle{plain}
\newtheorem{theorem}{Theorem}[section]

\newtheorem{lemma}[theorem]{Lemma}

\theoremstyle{definition}

\theoremstyle{remark}

\newtheorem{problem}{Problem}[section]

\usepackage[algo2e,linesnumbered,ruled,vlined]{algorithm2e}
\usepackage{bm}
\usepackage{bbm}
\usepackage{xcolor}
\usepackage{times}
\usepackage{url}

\frenchspacing

\newcommand{\E}{\mathbb{E}}

\newcommand{\OPT}{\mathrm{OPT}}
\newcommand{\R}{\mathbb{R}}

\newcommand\numberthis{\addtocounter{equation}{1}\tag{\theequation}}

\begin{document}

\title{An Asymptotically Optimal Approximation Algorithm\\ for Multiobjective Submodular Maximization at Scale}

\author{
  Fabian Spaeh \\ fspaeh@bu.edu \\ Boston University
\and
  Atsushi Miyauchi \\ atsushi.miyauchi@centai.eu \\ CENTAI Institute
}

\date{May 2025}

\maketitle

\begin{abstract}
  Maximizing a single submodular set function subject to a cardinality
  constraint is a well-studied and central topic in combinatorial optimization.
  However, finding a set that maximizes multiple functions at the same time is
  much less understood, even though it is a formulation which naturally occurs
  in robust maximization or problems with fairness considerations such as fair
  influence maximization or fair allocation.
    In this work, we consider the problem of maximizing the minimum over many
  submodular functions, which is known as multiobjective submodular
  maximization. All known polynomial-time approximation algorithms either
  obtain a weak approximation guarantee or rely on the evaluation of the
  multilinear extension. The latter is expensive to evaluate and renders such
  algorithms impractical. We bridge this gap and introduce the first scalable and
  practical algorithm that obtains the best-known approximation guarantee.
  We furthermore introduce a novel application \emph{fair centrality maximization}
  and show how it can be addressed via multiobjective submodular maximization.
  In our experimental evaluation, we show that our algorithm outperforms known
  algorithms in terms of objective value and running time.
\end{abstract}

\section{Introduction}\label{sec:intro}

Due to the natural diminishing returns property, maximizing submodular functions
is a central task in various fields such as optimization, machine learning, and economics~\cite{Fujishige05,Nemhauser+78}. 
In submodular maximization subject to a cardinality constraint 
the task is to find a set $S$ of size $|S| \le B$ for some budget $B$ such that a submodular objective $f(S)$ is maximized.
Applications range from welfare maximization to experimental design and viral
marketing via influence maximization \cite{Bilmes22, Kempe+03, Krause+07}.
Maximizing a single monotone submodular function is well understood, and the greedy algorithm is practically
efficient and obtains the best possible polynomial-time approximation guarantee of $1 - \frac 1 e \approx 0.63$~\cite{Nemhauser+78Best}.
Moreover, so-called \emph{lazy evaluations} make the greedy algorithm highly scalable~\cite{Minoux05}.

Another line of work studies the task where $k$ different submodular objectives
have to be simultaneously maximized, which is called multiobjective submodular maximization.
This is also important for many applications.
For instance, in robust experimental design, the goal is to maximize a submodular
function $f_\theta$, which depends on an a-priori unknown parameter $\theta$
\cite{Krause+08}. One seeks to find a set that maximizes $f_\theta$
simultaneously for all valid parameters~$\theta$.
Recently, many applications in artificial intelligence consider the max-min
fairness objective, which often naturally falls into the multiobjective scenario.
One such example is fair influence maximization, where the goal is to
maximize the influence among all groups~\cite{Tsang+19}.

Even though applications for the multiobjective setting are numerous and relevant,
solving the problem with theoretical guarantees still presents
challenges for current algorithms. Indeed, the prior work can be split into two
categories of approximation algorithms: First, there are algorithms that cannot
recover the $(1 - \frac 1 e)$-approximation, but are discrete and
efficient in practice. The best approximation ratio for such an algorithm
is $(1 - \frac 1 e)^2 \approx 0.40$ \cite{Udwani18}.
Second, there are algorithms that (almost) recover the
$(1 - \frac 1 e)$-approximation guarantee~\cite{Chekuri+10,Udwani18}, but solve a continuous relaxation
of the problem based on the multilinear extension, which is known to be
impractical~\cite{Bai+18, Chen+24, Buchbinder+24}. 
These algorithms do not scale to larger problem instances, as the ones
necessitated by the mentioned applications.

This is where our work comes in: We introduce an algorithm that (almost) achieves
the $(1 - \frac 1 e)$-approximation, but does not relax the problem to the
continuous setting and is therefore efficient in practice. This settles an open
question of \citet{Udwani18}, showing that there is a practically
efficient algorithm that obtains best possible approximation ratio.
Our main contributions are the following:
\begin{enumerate}
  \item We introduce a novel algorithm for multiobjective submodular
    maximization. Our algorithm (almost) achieves the best possible approximation ratio of $1 - \frac 1 e$. 
            Our algorithm is vastly more efficient than algorithms with comparable guarantees,
    which rely on a continuous relaxation and a rounding procedure.
    Our algorithm, instead,
    achieves its guarantee via a novel concentration argument.

  \item Furthermore, we show how multiplicative
    weights updates (MWU) can further reduce the number of function evaluations,
    which also yields an analogue to the lazy greedy algorithm for the multiobjective problem.

  \item We introduce a novel application \emph{fair centrality maximization}, 
    as a generalization of the well-studied task of centrality maximization with fairness consideration, 
    and show how it can be formulated as a large-scale multiobjective submodular maximization. 

  \item We experimentally verify that our algorithm outperforms previous methods
    in terms of objective value and running time. 
                The practical efficiency of our method allows us to be the first to solve large-scale instances 
    while retaining quality of solutions. \end{enumerate}

\section{Related Work}\label{sec:related}

\citet{Nemhauser+78} showed that the greedy algorithm achieves a
$(1 - \frac 1 e)$-approximation for cardinality-constrained monotone submodular
maximization. This is tight, unless
$\mathrm{P}\!=\!\mathrm{NP}$~\cite{Feige98}. More algorithms
for this and related settings exist
\cite{Calinescu+11,Badanidiyuru+14,Mirzasoleiman+15}.

\citet{Krause+08} introduced the multiobjective submodular
maximization problem in the context of robust experimental design.
They provided a bi-criteria algorithm and showed that the problem becomes
inapproximable in the regime where $B \le k$.
\citet{Chen+17} also proved that computing a $(1 - \frac 1 e)$-approximate solution for the problem is $\mathrm{NP}$-hard.
When surveying approximation algorithms for the problem, we distinguish between two
types of algorithms.

The first type uses a continuous relaxation
of the submodular set functions $f \colon 2^V \to \mathbb R$ to
$F \colon [0, 1]^V \to \mathbb R$, which is called the multilinear extension.
After obtaining a continuous solution via techniques from convex optimization,
the solution is then rounded to output a single set. Such approaches
have the benefit that they (almost) offer the best possible approximation ratio
of $1 - \frac 1 e$. However, evaluating the multilinear extension is costly
and impractical for even moderately sized problems, as we can generally only
approximate the relaxation via sampling.
In their seminal work, \citet{Chekuri+10} introduced the oblivious
rounding scheme \emph{Swap Rounding} and showed how this leads to an
algorithm for the multiobjective problem even subject to a matroid constraint.
Later, \citet{Udwani18} introduced a more efficient algorithm for
cardinality constraints. \citet{Tsang+19} gave an algorithm that
finds a continuous solution via Frank-Wolfe, and they showed that the
multilinear extension can be evaluated efficiently for influence maximization.

The second type of algorithm avoids evaluation of the multilinear extension
and only operates on discrete solutions. \citet{Udwani18} also introduced
an efficient $(1 - \frac 1 e)^2$-approximation algorithm via MWU. 
The algorithm maintains a vector of weights $y \in \Delta_C$, 
where $C$ is the index set of multiple submodular functions and $\Delta_C$ is the probability simplex over $C$, 
and uses a subroutine that solves the problem of maximizing the combined submodular function  
$\sum_{c \in C} y_c f_c(S)$. 
The weights are then updated based on the
individual values achieved for each objective. However, combining the solutions
created in each iteration introduces an additional factor of $1 - \frac 1 e$
into the approximation guarantee, which is the reason that this algorithm
does not retain the guarantee of continuous algorithms.
\citet{Orlin+18} devised a combinatorial algorithm that almost achieves the
$(1- \frac 1 e)$-approximation, but only for the case when $k\leq 2$.

Other efficient algorithms use the greedy paradigm, but do not obtain
constant-factor approximation guarantees.
The aforementioned bi-criteria algorithm by \citet{Krause+08} maximizes a modified objective function,
but may
violate the budget constraint in order to meet its approximation guarantee. 
Later \citet{Chen+17}, \citet{Wilder18}, and \citet{Anari+19} also gave bi-criteria approximation algorithms.

Instead of maximizing the minimum of multiple objectives, \citet{Soma+17} considered \emph{regret ratio minimization} where the goal is to find a collection of subsets (rather than a single subset) that approximate the Pareto optimal set to the multiple objectives. They designed algorithms for the problem which were improved by \citet{Feng+21} and \citet{Wang+23}. 
\citet{Malherbe+22} considered an objective based on quantile maximization. 
Recently, \citet{Wang+24} introduced the problem of maximizing the average of objectives subject to multiple constraints ensuring the value of each objective. 
\citet{Fazzone+24} studied a more general problem 
 and extended the algorithm of \citet{Krause+08} to obtain a sequence of approximately Pareto optimal solutions.

Related to the multiobjective problem, 
submodular maximization has also been studied with fairness and diversity considerations for the output subset. 
In this context, each element is associated with a categorical attribute such as race or gender, 
and the output is required to ensure that no particular attribute value is under- or over-represented.  
This problem has actively been studied in both the offline setting~\cite{Celis+18,ElHalabi+24,Tang+23,Yuan+23} 
and the streaming setting~\cite{ElHalabi+20,ElHalabi+23,Wang+21,Cui+24}.

\section{Problem Definition}\label{sec:problem}

We now formally define the notion of submodularity and the problem we
consider in this work.
A set function on a finite universe $V$ is a function $f \colon 2^V \to \R$
on all subsets of $V$. 
A set function $f$ is said to be \emph{monotone} if $f(S)\leq f(T)$ for all sets $S\subseteq T\subseteq V$.
For sets $S\subseteq V$ and $T\subseteq V$,
we define the \emph{marginal gain} of $T$ over $S$ as
$f(T\mid S)=f(S\cup T)-f(S)$.
If $T$ consists of a single element $v$, we simply write $f(v\mid S)$ instead
of $f(\{v\}\mid S)$.
A set function is called \emph{submodular} if it has diminishing returns,
i.e., for all sets $S \subseteq T \subseteq V$ and elements $v \in V$ holds that
\[
  f(v \mid S) \ge f(v \mid T) .
\]
Maximization of submodular functions is well-understood. 
We consider
(multiplicative) approximation ratios, so we assume throughout that submodular
functions are nonnegative.

In our work, we are
interested in a generalization of this objective where we consider multiple
submodular functions over the same universe. The goal is to find a set
that maximizes all functions at the same time:

\begin{problem}[Multiobjective Submodular Maximization]
  \label{prob:multiobjective}
  Let $C$ be a finite set, which we refer to as the set of colors, and define $k = |C|$.
  We are given a universe $V$ and monotone and submodular
  set functions $f_c \colon 2^V \to \mathbb R_{\geq 0}$ for each $c \in C$, all on the
  same universe $V$.     Given a budget $B$, 
  we are asked to find a set $S \subseteq V$ of size $|S| \le B$
  maximizing the least value among all colors, which is $\min_{c \in C} f_c(S)$.
\end{problem}

In our work, we are concerned with the case when the budget is large compared
to the number of colors, i.e., $B \ge k$, as this avoids the regime where
the problem becomes inapproximable \cite{Krause+08}.
Submodular maximization (i.e., the special case of a single color) is known to
be NP-hard as it generalizes the maximum coverage problem.
For the latter, no polynomial-time algorithm with an approximation ratio
better than $1 - \frac 1 e$ is known. At the same time, the greedy algorithm
provably achieves this approximation ratio for any submodular
function~\cite{Nemhauser+78}. 
Multiobjective submodular maximization (i.e., the general case of multiple
colors) is a harder problem, but we aim for (almost) the same
approximation guarantee. As in the prior work, our goal is to design an
algorithm that provably obtains an approximation ratio of
$1 - \frac 1 e - \epsilon$ with probability $1 - \delta$ for any pair of
$\epsilon, \delta > 0$. Both values show up in the running time of our
algorithm and in a necessary condition on the budget $B$.

\paragraph{Further Notation.}

We use $\OPT$ for the optimal solution
$\OPT = \arg\max_{S \subseteq V : |S| \le B} \min_{c \in C} f_c(S)$
of the multiobjective submodular maximization 
(Problem~\ref{prob:multiobjective}).
Abusing notation, we also use $\OPT$ to denote the 
optimal value, i.e., we write $\OPT = \min_{c \in C} f_c(\OPT)$.
Given a finite set $X$, we denote the set of probability distributions
over $X$ as the simplex
$\Delta_X = \{ x \in [0,1]^X : \sum_{v \in X} x_v = 1 \}$.
We will use this to denote probability distributions over the set of colors
$\Delta_C$ or the universe $\Delta_V$.

\section{Our Algorithm}\label{sec:algorithm}

\begin{algorithm2e}[t]
\KwIn{Monotone and submodular set functions $f_c \colon 2^V \to \R_{\geq 0}$ for $c \in C$, budget $B$, and the optimal value $\OPT$}
$S^{(0)} \gets \emptyset$\;
\For{$i = 1, 2, \dots, B$}{
  Let $x^{(i)} \in \Delta_V$ be a solution to the problem:\\ Find $x \in \Delta_V$
  such that for all $c \in C$ holds
  \begin{align}
    \label{eq:4}
    \sum_{v \in V} x_v f_c(v \mid S^{(i-1)}) \ge \frac 1 B \left(\OPT - f_c(S^{(i-1)})\right);
  \end{align}\\
  Sample $v^{(i)} \sim x^{(i)}$\;
  Update $S^{(i)} \gets S^{(i-1)} \cup \{v^{(i)}\}$\;
}
\Return{$S^{(B)}$}.
\caption{Greedy for Problem~\ref{prob:multiobjective}}
\label{alg:multiobjective-simple}
\end{algorithm2e}

We sketch the central component of our algorithm for multiobjective submodular
maximization (Problem~\ref{prob:multiobjective}) in
Algorithm~\ref{alg:multiobjective-simple}. Note that this is merely to
illustrate and motivate our core ideas, and we also provide an implementable
version as Algorithm~\ref{alg:multiobjective}.
Before delving into the analysis of our approximation guarantee, we
want to motivate our algorithm design.
For now, we assume that we know the optimal value $\OPT$;
an assumption that we will later remove.

As in the greedy paradigm, we construct a solution $S^{(i)}$ over $1 \le i \le B$
iterations and in each iteration, we identify a good element $v^{(i)}$ to add.
For the case of submodular maximization (i.e., the special case of a single
color), using the element with maximum marginal gain
$v^* = \arg\max_{v \in V} f(v \mid S^{(i-1)})$ is sufficient to obtain the
$(1 - \frac 1 e)$-approximation.
However, the analysis reveals that it is not necessarily required to add the element
of maximum marginal gain. Indeed, adding any element $v \in V$ whose marginal
gain satisfies the inequality
\begin{align}
  \label{eq:17}
  f(v \mid S^{(i-1)}) \ge \frac 1 B \left(\OPT - f(S^{(i-1)})\right)
\end{align}
is sufficient. Intuitively, adding an element with a marginal gain that is
high enough compared to the distance to optimality is sufficient, and it is
easy to show that $v^*$ also satisfies Inequality~\eqref{eq:17}.
Transferring this to the case of multiple colors is not straightforward as there
is not necessarily a single element that satisfies Inequality~\eqref{eq:17}
simultaneously for all colors $c \in C$. Instead, we show that we can satisfy
\eqref{eq:17} in expectation with a probability distribution
$x^{(i)} \in \Delta_V$ which is exactly what we require for \eqref{eq:4} in the
description of Algorithm~\ref{alg:multiobjective-simple}.
As of now, it is unclear that such a probability distribution exists, but
show this in the following lemma:

\begin{lemma}
  \label{lem:opt-is-good}
  In each iteration of Algorithm~\ref{alg:multiobjective-simple}, there is
  a solution $x^{(i)} \in \Delta_V$ that satisfies Inequality~\eqref{eq:4}.
\end{lemma}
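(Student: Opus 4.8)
The plan is to exhibit an explicit witness distribution, namely (essentially) the uniform distribution supported on the optimal set $\OPT$, and to verify inequality \eqref{eq:4} one color at a time using submodularity, monotonicity, and the definition of the optimal value. Fix an iteration $i$ and abbreviate $S = S^{(i-1)}$. Since $|\OPT| \le B$, I would set $x^{(i)} \in \Delta_V$ to put mass $1/B$ on each element of $\OPT$ and distribute the leftover mass $1 - |\OPT|/B \ge 0$ arbitrarily over $V$ (for instance, all of it on one fixed vertex). This is a legitimate probability distribution, so $x^{(i)} \in \Delta_V$.

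Next, I would fix a color $c \in C$ and lower-bound $\sum_{v \in V} x^{(i)}_v f_c(v \mid S)$. Because $f_c$ is monotone, every marginal $f_c(v \mid S)$ is nonnegative, so the contribution of the leftover mass only helps and can be discarded, leaving $\sum_{v \in V} x^{(i)}_v f_c(v \mid S) \ge \frac 1 B \sum_{v \in \OPT} f_c(v \mid S)$. The standard telescoping argument — writing $f_c(\OPT \mid S)$ as a sum of single-element marginals taken over progressively larger sets and using diminishing returns to replace each conditioning set by $S$ — gives $\sum_{v \in \OPT} f_c(v \mid S) \ge f_c(\OPT \mid S) = f_c(\OPT \cup S) - f_c(S)$. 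Monotonicity then yields $f_c(\OPT \cup S) \ge f_c(\OPT)$, and the definition $\OPT = \min_{c' \in C} f_{c'}(\OPT)$ gives $f_c(\OPT) \ge \OPT$. Chaining these bounds produces $\sum_{v \in V} x^{(i)}_v f_c(v \mid S) \ge \frac 1 B (\OPT - f_c(S))$, which is exactly \eqref{eq:4}; since $c$ was arbitrary, the lemma follows.

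I do not expect a genuine obstacle: the statement is a feasibility claim and the optimal set is the natural certificate. The only points requiring a little care are (i) checking that $x^{(i)}$ is a bona fide element of $\Delta_V$ when $|\OPT| < B$, handled by parking the surplus mass on an arbitrary vertex and invoking nonnegativity of the marginals, and (ii) citing the subadditivity of marginal gains as a consequence of submodularity rather than assuming it. As an alternative that sidesteps the surplus-mass bookkeeping, one may first pad $\OPT$ to a set of size exactly $B$ using monotonicity and then take the exact uniform distribution on that set.
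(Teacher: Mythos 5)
Your proof is correct and follows essentially the same route as the paper: place mass $\frac{1}{B}$ on each element of $\OPT$ and chain submodularity, monotonicity, and the definition of $\OPT$ to verify \eqref{eq:4} for each color. Your extra care in parking the surplus mass so that $x^{(i)}$ genuinely lies in $\Delta_V$ when $|\OPT| < B$ is a small but valid refinement of the paper's argument, which leaves that normalization implicit.
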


\begin{proof}
  Fix an iteration $1 \le i \le B$. 
  Let $x^* \in [0,1]^V$ be such that $x^*_v = \frac 1 B$ if $v \in \OPT$
  and $x^*_v = 0$ otherwise. Note that $x^*$
  satisfies Inequality~\eqref{eq:4} for all $c \in C$ since
  \begin{align*}
    &\sum_{v \in V} x^*_v f_c(v \mid S^{(i-1)}) \\
    &= \frac 1 B \sum_{v \in \OPT} f_c(v \mid S^{(i-1)}) \\
    &\ge \frac 1 B f_c(\OPT \mid S^{(i-1)}) & \textrm{(submodularity)} \\
    &\ge \frac 1 B \left( f_c(\OPT) - f_c(S^{(i-1)}) \right) & \textrm{(monotonicity)} \\
    &\ge \frac 1 B \left( \OPT - f_c(S^{(i-1)}) \right) . & \hfill \qedhere
  \end{align*}
\end{proof}

Even though we know that a solution $x^{(i)}$ exists in each iteration,
we still need to argue that we can find it efficiently.
To this end, we rewrite the problem of finding $x^{(i)}$ satisfying Inequality~\eqref{eq:4} as an LP. 
Introducing an objective into the LP even allows us to remove the dependence on $\OPT$:
\begin{align*}
  \textrm{LP$(S^{(i-1)})$: } & \text{ max. } \xi \numberthis \label{eq:6} \\
  \textrm{ s.t. }
  \forall c \in C \colon & \! \sum_{v \in V} x_v\! \left(
  B f_c(v \mid S^{(i-1)})\! +\! f_c(S^{(i-1)})
  \right) \ge \xi, \\
  & \! \sum_{v \in V} x_v = 1, \\
  \forall v \in V \colon & x_v \ge 0. 
\end{align*}
Note that the equivalence with \eqref{eq:4} follows immediately by rearranging
terms in the first constraint using the second constraint
$\sum_{v \in V} x_v = 1$ and replacing $\xi$ with $\OPT$.
It is thus guaranteed by Lemma~\ref{lem:opt-is-good} that the optimal value
$\xi^*$ of the LP satisfies $\xi^* \ge \OPT$.

Now, operating over continuous solutions $x^{(i)}$ is impractical as it involves relaxing
the submodular functions. On the contrary, we are able to avoid continuous
solutions as we discretize immediately by sampling a
random element $v^{(i)} \in V$ according to $x^{(i)}$ and adding it to our
solution. The technical part of our analysis is concerned with showing that this
leads to a solution of high value for all colors $c \in C$, with sufficiently
high probability.
However, intuitively this is clear: Assume we have constructed a partial
solution $S^{(i)}$ that has low value for a specific color $c \in C$.
This means that the gap to optimality $\OPT - f_c(S^{(i-1)})$ is larger for
$c$ than for other colors, so the solution $x^{(i-1)} \in \Delta_V$ will
put more mass on elements $v \in V$ that satisfy \eqref{eq:4} for the color
$c$. This makes it more likely that $f_c(S^{(i)})$ will be larger, and overall
that all colors have high value, given that the budget is sufficiently large.

\begin{algorithm2e}[t]
\KwIn{Monotone and submodular set functions $f_c \colon 2^V \to \R_{\geq 0}$ for $c \in C$, budget $B$, failure probability $\delta$}
$S_{\mathrm{best}} \gets \emptyset$\;
\For{$t = 1, 2 ,\dots, \lceil \log(2 / \delta) \rceil$}{
  $S^{(0)} \gets \emptyset$\;
  \For{$i = 1, 2, \dots, B$}{
    Let $x^{(i)} \in \Delta_V$ be a solution to LP$(S^{(i-1)})$ \;
    Sample $v^{(i)} \sim x^{(i)}$\;
    Update $S^{(i)} \gets S^{(i-1)} \cup \{v^{(i)}\}$ \;
  }
  \If{$\min_{c \in C} f_c(S^{(B)}) \ge \min_{c \in C} f_c(S_{\mathrm{best}})$}{
    $S_{\mathrm{best}} \gets S^{(B)}$ \;
  }
}
\Return{$S_{\mathrm{best}}$}.
\caption{LP Greedy with Independent Repetitions}
\label{alg:multiobjective}
\end{algorithm2e}

Finally, we need to amplify the success probability of our algorithm, which
we do via independent repetitions of Algorithm~\ref{alg:multiobjective-simple}.
We detail our complete approach in Algorithm~\ref{alg:multiobjective}, which
also no longer requires knowledge of $\OPT$.

Adding elements randomly to the solution is related to other prior works
on submodular maximization. For instance, the algorithm of
\citet{Buchbinder+14} addresses non-monotone submodular maximization by using
randomness to avoid adding bad elements to the solution.
A difference to our approach is that our sampling probabilities are carefully
chosen to solve the multiobjective problem. This becomes obvious when running
our algorithm on a single color, in which case our algorithm reduces to the
standard greedy algorithm. This shows that we choose our sampling probabilities
to enforce progress across all colors simultaneously.

\subsection{Analysis Outline}
\label{sec:analysis}

We now state the approximation guarantee of
Algorithm~\ref{alg:multiobjective-simple}. This guarantee only holds if the
optimal value $\OPT$ is sufficiently large, but we will show in
Section~\ref{sec:removing-condition} how to convert this
into a guarantee that depends on the ratio of the budget $B$ and the number of
colors $k = |C|$ via a pre-processing
step. To bound the running time and the number of function evaluations of our algorithm, we
will show in Section~\ref{sec:lazy} how to use MWU to solve the LP efficiently.
We defer some of our proofs to Appendix~\ref{sec:appendix-omitted-proofs}.

\begin{restatable}{theorem}{multiobjective}
  \label{thm:multiobjective}
  If $\OPT \ge \frac{4}{\epsilon^2} M \log(2 k)$ where the maximum marginal
  gain is
  $M = \max_{c \in C} \max_{v \in V} f_c(v \mid \emptyset)$,
    then Algorithm~\ref{alg:multiobjective} outputs a solution $S$ such that
  \[
    \min_{c \in C} f_c(S) \ge \left( 1 - \frac 1 e - \epsilon \right) \OPT
  \]
  with probability at least $1 - \delta$.
  \end{restatable}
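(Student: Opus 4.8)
The plan is to analyze a single execution of the inner loop of Algorithm~\ref{alg:multiobjective} (one value of $t$), show that it succeeds with constant probability, and then amplify over the $\lceil\log(2/\delta)\rceil$ repetitions. Fix such a run, let $S^{(0)}\subseteq\dots\subseteq S^{(B)}$ be the sets it builds, let $\mathcal{F}_i=\sigma(v^{(1)},\dots,v^{(i)})$, and fix a color $c\in C$. Write $\Delta_i=f_c(v^{(i)}\mid S^{(i-1)})\in[0,M]$ (submodularity) and $g_i=\OPT-f_c(S^{(i)})$, so $g_i=g_{i-1}-\Delta_i$ is non-increasing. Since $x^{(i)}$ solves LP$(S^{(i-1)})$, whose optimum is $\ge\OPT$ by Lemma~\ref{lem:opt-is-good}, rearranging its first constraint gives the one-step drift
\[
  \E\bigl[\Delta_i\mid\mathcal{F}_{i-1}\bigr]=\sum_{v\in V}x^{(i)}_v f_c(v\mid S^{(i-1)})\ \ge\ \tfrac1B\,g_{i-1}.
\]
This already yields $\E[g_i\mid\mathcal{F}_{i-1}]\le(1-\tfrac1B)g_{i-1}$ and hence $\E[f_c(S^{(B)})]\ge(1-\tfrac1e)\OPT$; the real work is to upgrade this expectation bound to a statement that holds with high probability and survives a union bound over all $k$ colors. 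A plain additive (Azuma/McDiarmid) argument would pay a spurious $\sqrt{B}$ factor in the condition on $\OPT$, so I would instead establish a \emph{multiplicative}, Chernoff-type concentration for this (self-correcting) process.

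Concretely, I would build an exponential supermartingale whose multiplier decays geometrically at the greedy rate. Set $\mu_B:=\tfrac{2\ln(2k)}{\epsilon\,\OPT}$ and define, going backwards, $\mu_{i-1}:=\mu_i\bigl(1-\tfrac1B(1-\tfrac{\mu_iM}{2})\bigr)$ for $i=B,\dots,1$; the hypothesis on $\OPT$ makes $\mu_iM$ small, so this sequence is positive and increasing in $i$. Let $Z_i:=\exp(\mu_i g_i)$. Using $e^{-\mu x}\le 1-\mu x\bigl(1-\tfrac{\mu M}{2}\bigr)$ for $x\in[0,M]$ and $\mu M<2$ (which follows from $e^{-y}\le 1-y+\tfrac{y^2}{2}$ and $x^2\le Mx$) together with the drift bound, one obtains
\[
  \E\bigl[Z_i\mid\mathcal{F}_{i-1}\bigr]=e^{\mu_i g_{i-1}}\,\E\bigl[e^{-\mu_i\Delta_i}\mid\mathcal{F}_{i-1}\bigr]\le\exp\!\Bigl(\mu_i g_{i-1}-\mu_i\bigl(1-\tfrac{\mu_iM}{2}\bigr)\tfrac{g_{i-1}}{B}\Bigr)=Z_{i-1},
\]
so $(Z_i)_{i=0}^{B}$ is a supermartingale and $\E[Z_B]\le Z_0=e^{\mu_0 g_0}\le e^{\mu_0\OPT}$. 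Unrolling the recursion gives $\mu_0\le\mu_B(1-\tfrac1B)^B e^{O(\mu_BM)}\le\tfrac{\mu_B}{e}\bigl(1+O(\mu_BM)\bigr)$, where the factor $(1-\tfrac1B)^B\le e^{-1}$ is exactly the greedy contraction. Markov's inequality on $Z_B$ then gives
\[
  \Pr\bigl[f_c(S^{(B)})<(1-\tfrac1e-\epsilon)\OPT\bigr]=\Pr\bigl[g_B>(\tfrac1e+\epsilon)\OPT\bigr]\le e^{\mu_0\OPT-\mu_B(\frac1e+\epsilon)\OPT}\le e^{-\epsilon\mu_B\OPT/2}=\tfrac1{2k},
\]
where the last two steps use $O(\mu_BM)\le\tfrac{e\epsilon}{2}$ and $\mu_B\OPT=\tfrac{2\ln(2k)}{\epsilon}$, both of which are guaranteed by $\OPT\ge\tfrac4{\epsilon^2}M\log(2k)$ — this is the only point at which the hypothesis is used.

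The remainder is bookkeeping: a union bound over the $k$ colors shows a single inner run achieves $\min_{c\in C}f_c(S^{(B)})\ge(1-\tfrac1e-\epsilon)\OPT$ with probability at least $\tfrac12$, and since Algorithm~\ref{alg:multiobjective} performs $\lceil\log(2/\delta)\rceil$ independent runs and returns the best, the overall failure probability is at most $2^{-\lceil\log(2/\delta)\rceil}\le\delta$. I expect the concentration step to be the main obstacle; within it, the key device is the geometrically decaying multiplier sequence $\mu_i$ tied to the $(1-\tfrac1B)$ drift — a single fixed multiplier does not telescope through the recursion — and the second-order term $\mu_i^2M/2$ is exactly what degrades the required lower bound on $\OPT$ from $\Theta(\epsilon^{-1}M\log k)$ to $\Theta(\epsilon^{-2}M\log k)$. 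A minor point that also needs care is the case where $f_c(S^{(i)})$ overshoots $\OPT$ (so $g_i<0$), but the inequalities above hold verbatim for $g_i$ of either sign.
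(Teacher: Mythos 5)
Your proof is correct, and while its skeleton matches the paper's (one-step drift from the LP feasibility via Lemma~\ref{lem:opt-is-good}, per-color concentration, union bound over $k$ colors, then amplification over $\lceil\log(2/\delta)\rceil$ independent runs), the central concentration step is genuinely different. The paper gets the per-color tail bound $\exp(-\epsilon^2\OPT/(4M_c))$ by invoking an off-the-shelf adaptive Azuma inequality (Corollary~16 of Kuszmaul--Qi, Theorem~\ref{thm:azumas}) and feeding it $\mu=(1-\frac1e)\OPT$ via Lemma~\ref{lem:expected-value}; you instead build a bespoke exponential supermartingale $Z_i=\exp(\mu_i g_i)$ whose backward-defined multipliers $\mu_{i-1}=\mu_i\bigl(1-\frac1B(1-\frac{\mu_iM}{2})\bigr)$ absorb the greedy contraction, so that $(1-\frac1B)^B\le e^{-1}$ emerges from unrolling the multiplier recursion rather than from a separate expectation lemma. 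The paper's route is shorter given the citation; yours is self-contained and, notably, uses only the pathwise one-step drift $\E[\Delta_i\mid\mathcal F_{i-1}]\ge\frac1B g_{i-1}$, whereas the quoted hypothesis of Theorem~\ref{thm:azumas} is a constraint on the adversary that must hold for every realization --- the paper verifies it only through $\E[f_c(S^{(B)})]\ge(1-\frac1e)\OPT$, i.e., in expectation over the process, so your argument actually closes a step the paper glosses over. Your side remarks check out: the inequalities do hold verbatim when $g_{i-1}<0$ (the drift bound is then vacuous and the sign of $-\mu_i(1-\frac{\mu_iM}{2})g_{i-1}/B$ works in your favor), and the hypothesis $\OPT\ge\frac4{\epsilon^2}M\log(2k)$ gives $\mu_BM\le\epsilon/2$, which is exactly what makes the second-order correction $e^{\mu_BM/2}$ small enough to leave an $\epsilon\mu_B\OPT/2=\log(2k)$ margin in the Markov exponent. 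The only blemishes are cosmetic: the unrolled bound should be stated as $\mu_0\le\mu_B(1-\frac1B+\frac{\mu_BM}{2B})^B\le\frac{\mu_B}{e}e^{\mu_BM/2}$ to cover $B=1$ cleanly, and the final amplification step $2^{-\lceil\log(2/\delta)\rceil}\le\delta$ needs $\log$ read as $\log_2$ (the paper's own proof has the same looseness).
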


To prove the theorem, we begin with the basic observation that in each iteration, we add an element to
the solution that for each $c \in C$ is good in expectation.

\begin{restatable}{lemma}{expectationisgood}
  \label{lem:expectation-is-good}
  In each iteration $i$ of Algorithm~\ref{alg:multiobjective-simple} and for
  each color $c \in C$,
  \[
    \mathbb E[ f_c(v^{(i)} \mid S^{(i-1)}) \mid S^{(i-1)}] \ge
    \frac 1 B \left( \OPT - f_c(S^{(i-1)}) \right) .
  \]
\end{restatable}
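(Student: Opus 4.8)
The plan is to unwind the expectation over the random draw $v^{(i)} \sim x^{(i)}$ and use the defining property of $x^{(i)}$ as a feasible point of LP$(S^{(i-1)})$ (equivalently, of Inequality~\eqref{eq:4}). Concretely, conditioning on $S^{(i-1)}$, the element $v^{(i)}$ is drawn from the distribution $x^{(i)}$, so for each color $c \in C$ we have
\[
  \mathbb E\bigl[ f_c(v^{(i)} \mid S^{(i-1)}) \mid S^{(i-1)} \bigr]
  = \sum_{v \in V} x^{(i)}_v \, f_c(v \mid S^{(i-1)}).
\]
Now I invoke feasibility of $x^{(i)}$: by Lemma~\ref{lem:opt-is-good} a feasible solution to LP$(S^{(i-1)})$ exists with objective value $\xi^\ast \ge \OPT$, and since Algorithm~\ref{alg:multiobjective} takes $x^{(i)}$ to be a solution to LP$(S^{(i-1)})$, its first constraint (after rearranging using $\sum_v x^{(i)}_v = 1$ and $\xi^\ast \ge \OPT$, exactly as noted below \eqref{eq:6}) gives precisely Inequality~\eqref{eq:4}, i.e.
\[
  \sum_{v \in V} x^{(i)}_v \, f_c(v \mid S^{(i-1)}) \ge \frac 1 B \left( \OPT - f_c(S^{(i-1)}) \right).
\]
Chaining the two displays yields the claim.

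The only subtlety worth spelling out is the bookkeeping around what ``a solution to LP$(S^{(i-1)})$'' guarantees: the LP is stated with an objective $\max \xi$, so a priori an optimal solution only certifies $\sum_v x_v (B f_c(v\mid S^{(i-1)}) + f_c(S^{(i-1)})) \ge \xi^\ast$. I would note that Lemma~\ref{lem:opt-is-good} exhibits the explicit feasible point $x^\ast$ with value at least $\OPT$, hence $\xi^\ast \ge \OPT$, and therefore any optimal $x^{(i)}$ satisfies the stronger bound with $\OPT$ in place of $\xi^\ast$; dividing through by $B$ and subtracting the $f_c(S^{(i-1)})$ term recovers \eqref{eq:4}. (The implementable Algorithm~\ref{alg:multiobjective} is the relevant one, and its inner loop uses $x^{(i)}$ solving LP$(S^{(i-1)})$, so the argument applies verbatim; Algorithm~\ref{alg:multiobjective-simple} uses \eqref{eq:4} directly and is even more immediate.)

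I do not anticipate a genuine obstacle here — this lemma is essentially a restatement of the feasibility of $x^{(i)}$ together with the definition of expectation under the sampling step, and all the real content (existence of such an $x^{(i)}$) has already been handled by Lemma~\ref{lem:opt-is-good}. The ``hard part,'' insofar as there is one, is purely notational care: making sure the conditioning is on $S^{(i-1)}$ (so that $x^{(i)}$, which is a deterministic function of $S^{(i-1)}$, may be treated as fixed) and that the rearrangement from the LP constraint form to \eqref{eq:4} is stated cleanly. This lemma is the per-step building block that the subsequent concentration argument in the proof of Theorem~\ref{thm:multiobjective} will aggregate over the $B$ iterations.
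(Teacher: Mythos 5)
Your proposal is correct and matches the paper's proof essentially verbatim: both unwind the conditional expectation into $\sum_{v} x^{(i)}_v f_c(v \mid S^{(i-1)})$ and then invoke the defining feasibility condition \eqref{eq:4} of $x^{(i)}$. The extra bookkeeping you add about the LP objective and $\xi^\ast \ge \OPT$ is accurate but not needed for the statement as given, since Algorithm~\ref{alg:multiobjective-simple} requires $x^{(i)}$ to satisfy \eqref{eq:4} directly.
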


In expectation, we recover the greedy guarantee:

\begin{restatable}{lemma}{expectedvalue}
  \label{lem:expected-value}
  Algorithm~\ref{alg:multiobjective-simple} outputs a solution $S$ such that for each $c \in C$,
  \[
    \E[ f_c(S) ] \ge \left( 1 - \frac 1 e \right) \OPT .
  \]
\end{restatable}

The problem is that this is only in expectation, but we have to satisfy it
for all colors $c \in C$ simultaneously. To this end, we show how to use
concentration results for martingales.

To motivate our analysis further we explain how this generalizes an
easier problem that is better understood: Imagine that in each iteration $i$, we
obtain $x^{(i)} = x^*$ as defined in the proof of Lemma~\ref{lem:opt-is-good} as a solution
to \eqref{eq:4}. Since this solution does not change over the iterations, our
algorithm is equivalent to throwing $B$ balls (one ball per iteration) into $B$
bins (one bin per element of $\OPT$). It is known that the resulting number of
non-empty bins is $1 - \frac 1 e$ in expectation. In fact, we have shown
a stronger statement in Lemma~\ref{lem:expected-value}, i.e., that this even holds for
functions that are submodular over the non-empty bins. Our analysis thus takes
inspiration from concentration results for this balls-into-bins problem to obtain
concentration.
In particular, we use a form of Azuma's inequality due to \citet{Kuszmaul+21}:

\begin{theorem}[Corollary 16 in \citet{Kuszmaul+21}]
  \label{thm:azumas}
  Suppose that Alice constructs a sequence of random variables $X_1, \dots, X_B$,
  with $X_i \in [0, M]$, $M > 0$, using the following iterative process.
  Once the outcomes of $X_1, \dots, X_{i-1}$ are determined, Alice then selects
  the probability distribution $\mathcal D_i$ from which $X_i$ will be drawn;
  $X_i$ is then drawn from distribution $\mathcal D_i$. Alice is an adaptive adversary
  in that she can adapt $\mathcal D_i$ to the outcomes of $X_1, \dots, X_{i-1}$.
  The only constraint on Alice is that $\sum_{i=1}^B \E[X_i \mid \mathcal D_i] \ge \mu$,
  that is, the sum of the means of the probability distributions
  $\mathcal D_1, \dots, \mathcal D_B$ must be at least $\mu$. 
    If $X = \sum_{i=1}^B X_i$, then for any $\epsilon > 0$,
  \[
    \Pr[X \le (1 - \epsilon) \mu] \le \exp\left( - \frac{\epsilon^2 \mu}{2 M} \right) .
  \]
\end{theorem}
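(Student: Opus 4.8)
The statement is a one-sided (lower-tail) multiplicative Chernoff bound for an adaptive process, and the plan is to prove it by the standard exponential-moment (Cram\'er--Chernoff) method combined with a supermartingale argument that absorbs the adaptivity. Let $\mathcal F_i$ denote the $\sigma$-algebra generated by $X_1, \dots, X_i$, so that $\mathcal D_i$ and the conditional mean $\mu_i := \E[X_i \mid \mathcal D_i] = \E[X_i \mid \mathcal F_{i-1}]$ are $\mathcal F_{i-1}$-measurable, and read the constraint as $\sum_{i=1}^B \mu_i \ge \mu$ holding surely over every realization of the process. First I would apply Markov's inequality to $e^{-sX}$ for a parameter $s > 0$ to be chosen later, giving $\Pr[X \le (1-\epsilon)\mu] \le e^{s(1-\epsilon)\mu}\,\E[e^{-sX}]$; the whole task then reduces to an upper bound on the exponential moment $\E[e^{-sX}]$.

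The key per-step estimate is a convexity (chord) bound: since $y \mapsto e^{-sy}$ is convex, on the interval $[0,M]$ it lies below the secant through its endpoints, so $e^{-sy} \le 1 + \frac{y}{M}(e^{-sM}-1)$ for $y \in [0,M]$. Taking conditional expectations and writing $\phi := \frac{e^{-sM}-1}{M} < 0$ yields $\E[e^{-sX_i}\mid \mathcal F_{i-1}] \le 1 + \phi\mu_i \le e^{\phi\mu_i}$, where the last step uses $1+x \le e^x$. Crucially this bound holds for \emph{every} distribution $\mathcal D_i$ on $[0,M]$ with mean $\mu_i$, which is exactly what lets the adaptive adversary be handled uniformly. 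I would then verify that $M_j := \exp\!\big(-s\sum_{i\le j}X_i - \phi\sum_{i\le j}\mu_i\big)$ is a supermartingale with respect to $(\mathcal F_j)$: conditioning on $\mathcal F_{j-1}$ and factoring out the $\mathcal F_{j-1}$-measurable terms, the per-step bound gives $\E[M_j \mid \mathcal F_{j-1}] \le M_{j-1}$. Since $M_0 = 1$, this yields $\E[M_B] \le 1$, i.e. $\E\!\big[e^{-sX}\,e^{-\phi\sum_i \mu_i}\big]\le 1$.

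Because $\phi < 0$ and $\sum_i \mu_i \ge \mu$ surely, we have $e^{-\phi\sum_i\mu_i}\ge e^{-\phi\mu}$ pointwise, so the previous inequality gives $\E[e^{-sX}] \le e^{\phi\mu} = \exp\!\big(\mu\,\frac{e^{-sM}-1}{M}\big)$. Substituting into the Chernoff bound produces $\Pr[X \le (1-\epsilon)\mu] \le \exp\!\big(\frac{\mu}{M}[\,sM(1-\epsilon) + e^{-sM}-1\,]\big)$. The last step is to optimize $s$: setting $a := sM$ and minimizing $a(1-\epsilon) + e^{-a}-1$ gives $e^{-a} = 1-\epsilon$, i.e. $a = \ln\frac{1}{1-\epsilon}$, at which the bracket equals $-(1-\epsilon)\ln(1-\epsilon)-\epsilon$. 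To reach the stated exponent I would finish with the elementary inequality $-(1-\epsilon)\ln(1-\epsilon)-\epsilon \le -\frac{\epsilon^2}{2}$: the function $\epsilon \mapsto -(1-\epsilon)\ln(1-\epsilon)-\epsilon+\frac{\epsilon^2}{2}$ vanishes at $\epsilon=0$ and has derivative $\ln(1-\epsilon)+\epsilon\le 0$, hence is nonpositive. This gives $\Pr[X\le(1-\epsilon)\mu]\le\exp(-\epsilon^2\mu/(2M))$ for $\epsilon \in (0,1)$; for $\epsilon \ge 1$ the left-hand side is $0$ since $X \ge 0$, so the bound holds trivially.

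The main obstacle is conceptual rather than computational: ensuring the adaptivity of the adversary does not interfere with the exponential-moment estimate. The supermartingale framing resolves this cleanly, because the chord bound on $\E[e^{-sX_i}\mid\mathcal F_{i-1}]$ is uniform over all admissible $\mathcal D_i$, and the correction term $-\phi\sum_i\mu_i$ in $M_j$ is designed to cancel the per-step contribution regardless of the adversary's choices. The one point demanding care is that the mean constraint must be interpreted as holding surely rather than merely in expectation, since the penultimate step replaces $\sum_i\mu_i$ by its lower bound $\mu$ pointwise inside the expectation.
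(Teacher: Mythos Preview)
The paper does not prove this theorem; it is quoted verbatim as Corollary~16 of \citet{Kuszmaul+21} and used as a black box in the proof of Lemma~\ref{lem:individual-concentration}. Your proposal therefore cannot be compared against a proof in the paper, but it is a correct and complete proof of the cited result: the secant bound for the convex function $y\mapsto e^{-sy}$ on $[0,M]$, the supermartingale $M_j$ that absorbs the adaptive choice of $\mathcal D_i$, and the final calculus inequality $-(1-\epsilon)\ln(1-\epsilon)-\epsilon\le -\epsilon^2/2$ are all handled cleanly. The one minor imprecision is the claim that the left-hand side vanishes for $\epsilon\ge 1$: at $\epsilon=1$ one has $\Pr[X\le 0]=\Pr[X=0]$, which need not be zero, but your optimized bound in that case gives $\exp(-\mu/M)\le\exp(-\mu/(2M))$ anyway, so the stated inequality still holds.
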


We use this to argue for concentration around (the lower bound on) the
expectation, which was $(1 - \frac 1 e) \OPT$:

\begin{restatable}{lemma}{individualconcentration}
  \label{lem:individual-concentration}
  Let $S$ be the output of Algorithm~\ref{alg:multiobjective-simple}. For each $c \in C$, it  holds that 
  \[
    \Pr\left[f_c(S) \le \left(1 - \frac 1 e - \epsilon \right) \OPT \right]
    \le \exp\left( - \frac{\epsilon^2 \OPT}{4 M_c} \right)
  \]
  where $M_c = \max_{v \in V} f_c(v \mid \emptyset)$. \end{restatable}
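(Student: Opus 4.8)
The plan is to apply Theorem~\ref{thm:azumas} with the right martingale-difference-type sequence, using the marginal gains produced by the algorithm as the random variables $X_i$. Concretely, fix a color $c \in C$ and one run of Algorithm~\ref{alg:multiobjective-simple}, and set $X_i = f_c(v^{(i)} \mid S^{(i-1)})$. By monotonicity and submodularity each $X_i$ lies in $[0, M_c]$, and the distribution $\mathcal D_i$ of $X_i$ is determined once $S^{(i-1)}$ (equivalently, the outcomes $X_1, \dots, X_{i-1}$) is fixed, since $x^{(i)}$ solves LP$(S^{(i-1)})$. So Alice's adaptive-adversary setup matches our situation exactly.

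The first key step is to lower-bound $\sum_{i=1}^B \E[X_i \mid \mathcal D_i]$ by the quantity $\mu$ we want to plug in. By Lemma~\ref{lem:expectation-is-good}, $\E[X_i \mid \mathcal D_i] = \E[f_c(v^{(i)} \mid S^{(i-1)}) \mid S^{(i-1)}] \ge \frac 1 B(\OPT - f_c(S^{(i-1)}))$. The subtlety is that the right-hand side is itself random (it depends on $S^{(i-1)}$), so we cannot directly read off a deterministic $\mu$. The fix is the telescoping/unrolling argument already implicit in Lemma~\ref{lem:expected-value}: since $X = \sum_i X_i = f_c(S^{(B)})$, the inequality $\E[X_i \mid \mathcal D_i] \ge \frac 1 B(\OPT - f_c(S^{(i-1)})) = \frac 1 B(\OPT - \sum_{j<i} X_j)$ can be rearranged so that the partial sums of the $\E[X_i \mid \mathcal D_i]$ are controlled pathwise. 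In fact one gets the cleaner pointwise bound: for every realization, $\sum_{i=1}^B \E[X_i \mid \mathcal D_i] \ge (1 - (1 - \tfrac1B)^B)\OPT \ge (1 - \tfrac1e)\OPT =: \mu$, because the recursion $f_c(S^{(i)}) \ge f_c(S^{(i-1)}) + \tfrac1B(\OPT - f_c(S^{(i-1)}))$ holds in conditional expectation at each step and the deficit $\OPT - f_c(S^{(i)})$ shrinks by a factor $(1-\tfrac1B)$ in expectation — so taking conditional expectations layer by layer and summing gives the deterministic lower bound on the sum of means. I expect this bookkeeping to be the main obstacle: making precise that the hypothesis of Theorem~\ref{thm:azumas} only needs $\sum_i \E[X_i \mid \mathcal D_i] \ge \mu$ as a sure event (or that the proof of Theorem~\ref{thm:azumas} goes through with this replaced by a suitable supermartingale statement), and confirming that the greedy-deficit recursion indeed yields $\mu = (1-\tfrac1e)\OPT$ surely rather than merely in expectation.

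Once $\mu = (1 - \tfrac1e)\OPT$ and $M = M_c$ are identified, the second step is just to invoke Theorem~\ref{thm:azumas} with the given $\epsilon' $. We want the event $f_c(S) \le (1 - \tfrac1e - \epsilon)\OPT$; writing $(1 - \tfrac1e - \epsilon)\OPT = (1 - \epsilon')\mu$ with $\epsilon' = \frac{\epsilon \OPT}{\mu} = \frac{\epsilon}{1 - 1/e} \ge \epsilon$, Theorem~\ref{thm:azumas} gives
\[
  \Pr[f_c(S) \le (1 - \tfrac1e - \epsilon)\OPT] \le \exp\!\left(-\frac{(\epsilon')^2 \mu}{2 M_c}\right) = \exp\!\left(-\frac{\epsilon^2 \OPT^2}{2 M_c \mu}\right) \le \exp\!\left(-\frac{\epsilon^2 \OPT}{2 M_c}\right),
\]
using $\mu \le \OPT$ in the last step. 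This already beats the claimed bound $\exp(-\epsilon^2\OPT/(4M_c))$; alternatively, using only the crude $\epsilon' \ge \epsilon$ and $\mu \ge \OPT/2$ (valid since $1 - 1/e > 1/2$) yields exactly $\exp(-\epsilon^2\OPT/(4M_c))$, which is presumably the route the authors take for a clean constant. Either way the statement follows for a single color; the union bound over $c \in C$ and the amplification over repetitions in Algorithm~\ref{alg:multiobjective} are handled separately in the proof of Theorem~\ref{thm:multiobjective}, so here we only need the per-color bound.

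A minor point to get right is the direction of the deficit recursion when taking conditional expectations: I would define $Y_i = \OPT - f_c(S^{(i)})$ and show $\E[Y_i \mid S^{(i-1)}] \le (1 - \tfrac1B) Y_{i-1}$ from Lemma~\ref{lem:expectation-is-good}, iterate to get $\E[Y_B] \le (1-\tfrac1B)^B(\OPT - f_c(\emptyset)) \le \tfrac1e \OPT$, and then note that the same per-step inequality, summed without taking full expectations, bounds $\sum_{i=1}^B \E[X_i \mid \mathcal D_i]$ from below pathwise — this last "without full expectation" claim is exactly where I'd be most careful, and if it turns out to be false as a sure statement, I would instead reprove the concentration bound directly via a supermartingale $Z_i = \exp(-\lambda \sum_{j\le i} X_j) \cdot (\text{correction for the deficit})$, i.e.\ reproduce Kuszmaul's argument adapted to our recursion rather than cite Theorem~\ref{thm:azumas} as a black box.
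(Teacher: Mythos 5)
Your proposal matches the paper's proof of Lemma~\ref{lem:individual-concentration}: the same choice of $X_i = f_c(v^{(i)} \mid S^{(i-1)})$ and $M = M_c$ in Theorem~\ref{thm:azumas}, the same $\mu = (1 - \frac{1}{e})\OPT$ obtained from Lemma~\ref{lem:expected-value}, and the same final algebra using $(1 - \frac{1}{e} - \epsilon)\OPT \le (1-\epsilon)\mu$ and $1 - \frac{1}{e} > \frac{1}{2}$ to reach the constant $4$ in the exponent. The pathwise-versus-in-expectation subtlety you flag in verifying the hypothesis $\sum_i \E[X_i \mid \mathcal D_i] \ge \mu$ is a genuine delicacy, but the paper's own proof treats it no more carefully --- it simply identifies $\sum_i \E[X_i \mid \mathcal D_i]$ with $\E[f_c(S^{(B)})]$ --- so on that point you are, if anything, more cautious than the source.
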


\begin{proof}
  Fix $c \in C$.
  In the context of Theorem~\ref{thm:azumas}, we use
  $X_i = f_c(v^{(i)} \mid S^{(i-1)})$
  and the distribution $\mathcal D_i$ over values
  $f_c(v^{(i)} \mid S^{(i-1)})$ where $v^{(i)} \sim x^{(i)}$.
  As such, $M = \max_{c \in C} M_c$ is also the maximum marginal value.
  Note that the
  random process is exactly as in Algorithm~\ref{alg:multiobjective-simple}:
  In each iteration, we select a (possibly adversarial) solution $x^{(i)}$
  to~\eqref{eq:4} which determines the distribution $\mathcal D_i$.
  It remains to argue that $\sum_{i=1}^B \E[X_i \mid \mathcal D_i] \ge \mu$ 
  for $\mu = (1 - \frac 1 e) \OPT$. 
    However, we have just shown this in Lemma~\ref{lem:expected-value}:
  \[
    \sum_{i=1}^B \E[X_i \mid \mathcal D_i] =
    \E\left[ \sum_{i=1}^B f_c(v^{(i)} \mid S^{(i-1)}) \right] =
    \E[ f_c(S^{(B)}) ] \ge \left( 1 - \frac 1 e \right) \OPT = \mu.
  \]
  Applying Theorem~\ref{thm:azumas} now yields
  \[
    \Pr[f_c(S) \le (1 - \epsilon) \mu] \le
    \exp\left( - \frac{\epsilon^2 \mu}{2 M_c} \right) .
  \]
  Noticing that $\left(1 - \frac 1 e - \epsilon \right) \OPT \le (1 - \epsilon) \mu$ and $1-\frac{1}{e}> \frac{1}{2}$, 
  we have the lemma. 
    \end{proof}

With the concentration for each color in hand,
we are now ready to prove Theorem~\ref{thm:multiobjective}:

\begin{proof}[Proof of Theorem~\ref{thm:multiobjective}]
Recall that we defined the maximum gain as $M = \max_{c \in C} M_c
= \max_{c \in C} \max_{v \in V} f_c(v \mid \emptyset)$. Assuming $\OPT \ge \frac 4 {\epsilon^2} M \log (2 k)$, we have by
Lemma~\ref{lem:individual-concentration} for each $c \in C$ that
\[
  \Pr\left[f_c(S) \le \left(1 - \frac 1 e - \epsilon\right) \OPT\right]
  \le \exp\left( - \frac{\epsilon^2 \OPT}{4 M} \right)
  \le \frac 1 {2 k}
\]
and by a union bound over the $k$ (dependent) colors, we obtain that
the result holds for all $c \in C$ with probability at least $\frac 1 2$.
We repeat the algorithm $r = \lceil \log(1 / \delta) \rceil$ times and output
the best solution.  Since the repetitions are independent, the probability that
none of the $\lceil \log(1 / \delta)\rceil$ solutions is a
$(1 - \frac 1 e - \epsilon)$-approximation is
$2^{-r} \le e^{-\log(1 / \delta)} = \delta$ as promised.
\end{proof}

\subsection{Removing the Condition on $\OPT$}
\label{sec:removing-condition}

We are able to push the condition on $\OPT$ into a condition on the relation of
the budget $B$ to the number of
colors $k = |C|$, which we state in Theorem~\ref{thm:final}.
As do all $(1 - \frac 1 e - \epsilon)$-approximation algorithms for
Problem~\ref{prob:multiobjective},
we also require the budget to be large. Recall that this
avoids the regime where the problem becomes inapproximable~\cite{Krause+08}
and is natural for many of the applications we mentioned in the introduction.
Our condition is slightly less restrictive compared to the prior work, and we
provide a detailed comparison at the end of this section.

The key idea is that we use a pre-processing step where we eliminate elements
$v \in V$ whose value $f_c(v \mid \emptyset)$ is large compared to $\OPT$ for
any color $c \in C$, as such elements impair the concentration.
We find such elements and add them to a partial solution $T$. After the
pre-processing, some colors may have reached $\OPT$. However,
we only worry about the set
\[
  \tilde C = \{ c \in C : f_c(T) < \OPT \}
\]
of colors which have not yet reached $\OPT$.
Our pre-processing guarantees that marginal gains for colors $c \in \tilde C$
on top of the partial solution $T$
are small compared to $\OPT$, and are thus not problematic for
Algorithm~\ref{alg:multiobjective}.
This is similar to the pre-processing step described by \citet{Udwani18}, but
our pre-processing does not require knowledge of $\OPT$.
We defer the full description of the pre-processing routine to
Algorithm~\ref{alg:pre-processing} in Appendix~\ref{sec:appendix-removing-condition}, along with all proofs.

We then run Algorithm~\ref{alg:multiobjective} with a reduced
budget $B - |T|$ on functions $\tilde f_c(A) = f_c(A \cup T)$, which
ensures by submodularity that marginal gains are small for the modified
instance. The technical difficulty is to show that reducing the budget does
not result in a big decrease of the optimum, which is non-trivial for
multiple objectives.
We provide the following novel result:

\begin{restatable}{lemma}{optchange}
  \label{lem:opt-change}
  Let $\widetilde \OPT_b = \max_{S \subseteq V : |S| \le b} \min_{c \in \tilde C} \tilde f_c(S)$
  for $b\ge 0$.
    Let $\gamma > 0$ be such that
  $\tilde f_c(v \mid \emptyset) \le \gamma \widetilde \OPT_B$
  for all $v \in V$ and $c \in \tilde C$.
  Then, for $\tilde B \le B$,
    \[
    \widetilde \OPT_{\tilde B} \ge \left(1 -
    \sqrt{3 \gamma \frac B {\tilde B} \log k} \right) \frac {\tilde B} B
    \widetilde \OPT_B .
  \]
\end{restatable}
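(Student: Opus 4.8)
The plan is to prove Lemma~\ref{lem:opt-change} by a probabilistic/sampling argument: take an optimal solution $S^*$ of size at most $B$ achieving value $\widetilde\OPT_B$ for the reduced instance, and produce from it a random subset $S$ of size at most $\tilde B$ whose min-value over $\tilde C$ is, with positive probability, at least the claimed bound. The natural construction is to include each element of $S^*$ independently with probability $p = \tilde B / B$ (or slightly less, to control the size), so that $\E[|S|] \le \tilde B$ and, for each color $c \in \tilde C$, the random variable $\tilde f_c(S)$ concentrates around its mean. The key submodularity fact I would invoke is the standard lemma that for a monotone submodular $g$ and a $p$-random subset $S$ of a ground set $A$, $\E[g(S)] \ge (1 - (1-p)^{|A|}) \, g(A) \ge p \cdot g(A)$ — actually I want the cleaner bound $\E[g(S)] \ge p\, g(A)$, which follows because $1-(1-p)^m \ge pm/(1+p(m-1)) \ge p$ is not quite it; more directly one uses that $\E[g(S)] \ge (1-(1-p)^{|A|})g(A)$ and separately a linear lower bound, but the simplest route is: $\E[g(S)] = \sum$ of expected marginal gains in a random order, bounded below by $p \cdot g(A)$ via the subadditive-over-random-order argument. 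So $\E[\tilde f_c(S)] \ge p\,\widetilde\OPT_B = \frac{\tilde B}{B}\widetilde\OPT_B$ for every $c \in \tilde C$.

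Next I would establish concentration of $\tilde f_c(S)$ around this mean. Since $S$ is a union of independent element-indicators and $\tilde f_c$ is monotone submodular with bounded marginals $\tilde f_c(v\mid\emptyset) \le \gamma\widetilde\OPT_B$, the function $S \mapsto \tilde f_c(S)$ is a $\gamma\widetilde\OPT_B$-Lipschitz (in Hamming distance) function of independent bits, so McDiarmid's bounded-differences inequality — or more sharply, a one-sided submodular/self-bounding version — gives $\Pr[\tilde f_c(S) \le \E[\tilde f_c(S)] - \lambda] \le \exp(-\lambda^2 / (2\,\text{(something)}))$. To get the specific constant $3$ and the $\log k$ in the bound $(1 - \sqrt{3\gamma (B/\tilde B)\log k})\,\tfrac{\tilde B}{B}\widetilde\OPT_B$, I would set the deviation to $\lambda = \sqrt{3\gamma (B/\tilde B)\log k}\cdot\tfrac{\tilde B}{B}\widetilde\OPT_B$ — note $\lambda^2 = 3\gamma \tfrac{\tilde B}{B}\log k\cdot(\widetilde\OPT_B)^2$ — and use the variance/range bound coming from at most $B$ coordinates each contributing at most $\gamma\widetilde\OPT_B$, times the success probability $p=\tilde B/B$ in a Bernstein-type bound, so the exponent becomes roughly $-\lambda^2/(2 p B (\gamma\widetilde\OPT_B)^2) = -\tfrac{3}{2}\log k$, beating $1/k$. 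Then a union bound over the at most $k$ colors in $\tilde C$ shows that with positive probability all colors simultaneously satisfy $\tilde f_c(S) \ge (1 - \sqrt{3\gamma(B/\tilde B)\log k})\tfrac{\tilde B}{B}\widetilde\OPT_B$; finally I handle the size constraint, either by conditioning on $|S|\le\tilde B$ (Chernoff makes this very likely, only mildly perturbing the union bound) or by tightening $p$ slightly and padding/trimming, invoking monotonicity so trimming only helps the lower bound is false — so instead I would just condition, or choose $p$ so that $\Pr[|S|>\tilde B]$ is tiny and subtract it from the success probability.

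I expect the main obstacle to be getting the concentration tight enough to yield exactly the stated constant, rather than merely some $O(\sqrt{\gamma(B/\tilde B)\log k})$ bound. Plain McDiarmid with $B$ coordinates of range $\gamma\widetilde\OPT_B$ gives an exponent $\sim -\lambda^2/(B(\gamma\widetilde\OPT_B)^2)$, which with $\lambda$ as above is only $-3\tfrac{\tilde B}{B}\log k$; this is enough to beat $1/k$ when $\tilde B = \Theta(B)$ but is too weak for small $\tilde B$. To fix this I would use a Bernstein/Freedman-style bound that exploits the small success probability $p = \tilde B/B$: the relevant "variance" is $\approx p \cdot B \cdot (\gamma\widetilde\OPT_B)^2 = \tilde B(\gamma\widetilde\OPT_B)^2$, giving exponent $-\lambda^2/(2\tilde B(\gamma\widetilde\OPT_B)^2) = -\tfrac{3}{2}\log k$, which is exactly what is needed. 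Verifying that a submodular function of a $p$-biased product measure admits such a Bernstein-type lower tail (e.g.\ via the entropy method or Chatterjee's exchangeable-pairs / Boucheron–Lugosi–Massart self-bounding machinery, or by a careful Doob-martingale with the sum-of-conditional-variances bounded by $p$ times the sum of squared ranges) is the technical heart; the submodularity of $\tilde f_c$ is what makes the conditional variances shrink with $p$, so this step genuinely uses both ingredients of the hypothesis. Everything else — the $p\,g(A)$ expectation lower bound, the union bound over $\tilde C$, and dealing with $|S|$ — is routine.
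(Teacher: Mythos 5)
Your skeleton matches the paper's: randomly sparsify an optimal $B$-element solution down to about $\tilde B$ elements, show each $\E[\tilde f_c]$ drops by at most a factor $\tilde B/B$, then get a lower tail from the bounded marginals $\tilde f_c(v\mid\emptyset)\le\gamma\widetilde\OPT_B$ and union bound over the at most $k$ colors to conclude existence. The paper implements this by evaluating the multilinear extension at $\frac{\tilde B}{B}\mathbf 1_{\widetilde\OPT_B}$ and applying swap rounding, which outputs a set of size exactly $\tilde B$ (so your cardinality worry disappears) and comes with the off-the-shelf tail bound $\Pr[\tilde f_c(\tilde S)\le(1-\delta)\tilde F_c(\tilde x)]\le\exp(-\tilde F_c(\tilde x)\delta^2/(2M_c))$, i.e.\ a variance proxy equal to \emph{mean times maximum marginal}.

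The genuine gap is in your concentration step. Your Bernstein variance proxy is $\sum_i p\,(\text{range}_i)^2\approx pB(\gamma\widetilde\OPT_B)^2=\tilde B\,\gamma^2\,\widetilde\OPT_B^2$, and with $\lambda^2=3\gamma\frac{\tilde B}{B}(\log k)\,\widetilde\OPT_B^2$ the resulting exponent is $\lambda^2/\bigl(2\tilde B\gamma^2\widetilde\OPT_B^2\bigr)=\frac{3\log k}{2B\gamma}$, not $\frac32\log k$ as you assert; the two coincide only when $B\gamma=1$. In the regime where the lemma is actually invoked (the proof of Theorem~\ref{thm:final} has $B\gamma=3k/\epsilon\gg1$), your exponent is $o(1)$ and the tail bound is vacuous, so the argument as computed does not prove the lemma. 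What you need is the sharper proxy $\E[\tilde f_c(S)]\cdot M_c\le\frac{\tilde B}{B}\gamma\,\widetilde\OPT_B^2$ — smaller than your Bernstein proxy by exactly the factor $B\gamma$ — which is what a multiplicative-Chernoff/self-bounding lower tail for monotone submodular functions (or the swap-rounding concentration the paper cites) delivers, and which does give exponent $\frac32\log k$. You name the self-bounding machinery as an option, but your computation does not use it, and the Bernstein route you carry out fails. A secondary issue: with $p=\tilde B/B$ you have $\Pr[|S|>\tilde B]\approx\frac12$, which must be absorbed into the union bound (conditioning instead biases toward smaller, hence lower-value, sets); this threatens the stated constant for small $k$, whereas swap rounding returns a set of size exactly $\tilde B$ and avoids the problem.
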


Note that prior work uses similar results but over the continuous space,
which makes our result more challenging.
Let $\tilde S$ be the solution of Algorithm~\ref{alg:multiobjective} to the
modified instance. We have the following guarantee on the combined
solution:

\begin{restatable}{theorem}{final}
  \label{thm:final}
  If $B \ge 108 \frac k {\epsilon^3} \log k$ then
  \[
    \min_{c \in C} f_c(T \cup \tilde S) \ge
      \left(1 - \frac 1 e - \epsilon \right) \OPT
  \]
  with probability at least $1 - \delta$.
\end{restatable}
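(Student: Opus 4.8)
The plan is to combine the pre-processing guarantee with Theorem~\ref{thm:multiobjective} applied to the modified instance, treating the two groups of colors—those already satisfied by $T$ and those in $\tilde C$—separately. First I would recall what the pre-processing (Algorithm~\ref{alg:pre-processing} in the appendix) achieves: it produces a set $T$ with $|T|$ bounded by something like $O(\tfrac{1}{\epsilon}\log k)$ (more precisely, each color can contribute only so many ``large'' elements before it reaches $\OPT$), and for every $c \in \tilde C$ the modified marginals satisfy $\tilde f_c(v \mid \emptyset) \le \gamma \, \widetilde{\OPT}_B$ for some small $\gamma$, say $\gamma = O(\epsilon^2/\log k)$, while $\widetilde{\OPT}_B \ge \OPT - \text{(something negligible)}$; in fact by submodularity $\widetilde{\OPT}_B \ge \OPT - $ nothing, since $\OPT$ restricted to the reduced budget already gives $\min_{c\in\tilde C}\tilde f_c \ge \OPT$ when $|T|$ is small relative to $B$. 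The colors $c \notin \tilde C$ are automatically fine: $f_c(T \cup \tilde S) \ge f_c(T) \ge \OPT \ge (1 - \tfrac1e - \epsilon)\OPT$ by monotonicity, so they require no further argument.

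Next I would handle $\tilde C$. Run Algorithm~\ref{alg:multiobjective} on the instance $\{\tilde f_c\}_{c \in \tilde C}$ with budget $\tilde B = B - |T|$. Two things must be checked. (i) The optimum does not degrade too much: by Lemma~\ref{lem:opt-change}, $\widetilde{\OPT}_{\tilde B} \ge \bigl(1 - \sqrt{3\gamma \tfrac{B}{\tilde B}\log k}\bigr)\tfrac{\tilde B}{B}\widetilde{\OPT}_B$. With $|T| = O(\tfrac1\epsilon \log k)$ and $B \ge 108\tfrac{k}{\epsilon^3}\log k$, we get $\tfrac{\tilde B}{B} = 1 - O(\epsilon^2 / k) \ge 1 - \epsilon/3$ (being generous), and $\sqrt{3\gamma\tfrac{B}{\tilde B}\log k} = O(\sqrt{\epsilon^2}) = O(\epsilon)$, so $\widetilde{\OPT}_{\tilde B} \ge (1 - \epsilon/3)\widetilde{\OPT}_B \ge (1 - \epsilon/3)\OPT$ after tuning constants. (ii) The large-$\OPT$ condition of Theorem~\ref{thm:multiobjective} holds for the reduced instance: we need $\widetilde{\OPT}_{\tilde B} \ge \tfrac{4}{(\epsilon')^2} M' \log(2|\tilde C|)$ where $M' = \max_{c\in\tilde C}\max_v \tilde f_c(v\mid\emptyset) \le \gamma\widetilde{\OPT}_B$; choosing $\gamma$ small enough (which is exactly what the budget bound $B \ge 108\tfrac{k}{\epsilon^3}\log k$ buys us, since $\gamma$ scales like $\tfrac{|T|}{B}$ or similar) makes $\tfrac{4}{(\epsilon')^2}\gamma\widetilde{\OPT}_B\log(2k) \le \widetilde{\OPT}_{\tilde B}$. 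Then Theorem~\ref{thm:multiobjective} gives $\tilde f_c(\tilde S) \ge (1 - \tfrac1e - \epsilon')\widetilde{\OPT}_{\tilde B}$ for all $c \in \tilde C$ with probability $\ge 1 - \delta$.

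Finally I would chain the inequalities: for $c \in \tilde C$,
\[
  f_c(T \cup \tilde S) = \tilde f_c(\tilde S) \ge \Bigl(1 - \tfrac1e - \epsilon'\Bigr)\widetilde{\OPT}_{\tilde B} \ge \Bigl(1 - \tfrac1e - \epsilon'\Bigr)\Bigl(1 - \tfrac\epsilon3\Bigr)\OPT \ge \Bigl(1 - \tfrac1e - \epsilon\Bigr)\OPT,
\]
where the last step follows by choosing $\epsilon' = \epsilon/3$ (or similar) and absorbing lower-order terms, using $1 - \tfrac1e < 1$. Combined with the trivial bound for $c \notin \tilde C$, this yields $\min_{c\in C} f_c(T\cup\tilde S) \ge (1 - \tfrac1e - \epsilon)\OPT$, and the only randomness is in Algorithm~\ref{alg:multiobjective}, so the success probability is $1 - \delta$.

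The main obstacle I anticipate is the bookkeeping on constants: getting the precise bound on $|T|$ from the pre-processing, translating it into a bound on $\gamma$ and on $\tilde B/B$, and then verifying that $B \ge 108\tfrac{k}{\epsilon^3}\log k$ is exactly the right threshold so that \emph{both} the optimum-degradation term from Lemma~\ref{lem:opt-change} \emph{and} the large-$\OPT$ precondition of Theorem~\ref{thm:multiobjective} come out with room to spare after splitting the error budget $\epsilon$ into a constant number of $O(\epsilon)$ pieces. The conceptual steps are straightforward given the lemmas already proved; the risk is that a loose estimate somewhere forces a worse budget bound, so the tuning of how $\epsilon$ is partitioned across the three sources of loss (budget reduction factor $\tilde B/B$, the $\sqrt{\gamma\cdots}$ term, and the $\epsilon'$ in the core theorem) needs to be done carefully.
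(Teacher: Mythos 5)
Your proposal follows essentially the same route as the paper's proof: dispatch the colors $c \notin \tilde C$ by monotonicity, apply Lemma~\ref{lem:opt-change} with $\gamma = \epsilon^2/(36\log k)$ to control the loss from the reduced budget, use the pre-processing bound on the marginal gains to verify the large-$\OPT$ condition, invoke Theorem~\ref{thm:multiobjective} with error parameter $\epsilon/3$, and chain the three $O(\epsilon)$ losses. The one inaccuracy is your guessed bound $|T| = O(\tfrac{1}{\epsilon}\log k)$: Lemma~\ref{lem:pre-processing} actually gives $|T| \le k B' = k/\gamma = O(\tfrac{k}{\epsilon^2}\log k)$, which is precisely what forces the threshold $B \ge 108\tfrac{k}{\epsilon^3}\log k$ so that $|T|/B \le \epsilon/3$ — but your resulting ratio $\tilde B / B \ge 1 - \epsilon/3$ and the rest of the argument match the paper.
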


\paragraph{Comparison with prior work.}

We compare our condition $B = \Omega(\frac k {\epsilon^3} \log k)$ of
Theorem~\ref{thm:final} with conditions on the budget of prior work.
\citet{Udwani18} requires that $\epsilon \le \frac 1 {10 \log k}$
for an approximation ratio of
$1 - \frac 1 e - \epsilon - \frac k {B \epsilon^3} - (\frac{\log B} B)^{1/2}$.
In order to achieve an approximation ratio of
$1 - \frac 1 e - \epsilon$, his algorithm requires that
$B = \Omega( \frac k {\epsilon^4} )$.
The algorithm of \citet{Tsang+19} provides a similar approximation
that is also dominated by the term $\frac{k}{B \epsilon^3}$. Since they have to use 
the same choice of $\epsilon$, their algorithm has the same condition on $B$.
The condition on $\epsilon$ in both works
requires $\frac 1 \epsilon \ge 10 \log k$, which means that
our condition for Theorem~\ref{thm:final} is never more restrictive, while
our condition is strictly less restrictive in its dependence on $\epsilon$.

\subsection{MWU and Lazy Evaluations}
\label{sec:lazy}

Algorithm~\ref{alg:multiobjective} may still be impractical for two
reasons. First, solving a general LP may be expensive and second,
we need to evaluate all marginal gains in each iteration. The latter is also
a problem appearing in the normal greedy algorithm (for a single color) and we
there resort to lazy evaluations.
It turns out that we can solve both efficiency problems via multiplicative
weights updates (MWU)~\cite{Arora+12}.
In particular, we show that solving the LP approximately
via a few rounds of MWU is sufficient, and we even transfer
lazy evaluations to multiple colors.

We detail this in Algorithm~\ref{alg:mwu} in Appendix~\ref{sec:appendix-lazy}.
We crucially use
that solving $\mathrm{LP}(S)$ admits a formulation as a two-player game:
The first player selects a probability distribution $y \in \Delta_C$ over the
colors, and the second player responds with the element $v \in V$ which
maximizes the convex combination $\sum_{c \in C} y_c \ell_c(v)$ where
$\ell_c(v)$ is a loss (or rather, a gain) given as
$\ell_c(v) = B f_c(v \mid S) + f_c(S)$. Finding the best response thus entails
a search over all elements $v \in V$, and we show how to transfer the concept of
lazy evaluations to the multiobjective problem in order to reduce the number
of function evaluations in practice.
Furthermore, two-player games can be solved efficiently using few rounds
of MWUs~\cite{Arora+12}. The complication is to bound the number of rounds, and
we do so by appealing to the structure of our problem.
This allows us to bound the overall running time of our algorithm.
Note that we first
run the pre-processing described in Algorithm~\ref{alg:pre-processing}, and
then apply 
Algorithm~\ref{alg:multiobjective}
where we solve the LP
approximately using MWU as described in Algorithm~\ref{alg:mwu}.

\begin{restatable}{theorem}{runningtime}
  \label{lem:running-time}
  Our algorithm for multiobjective submodular maximization runs in time
  $O(n B^3 \frac 1 {\epsilon^2} k \log (k) \log(1 / \delta))$ and requires
  $O(n B k \log (1 / \delta))$ function evaluations.
\end{restatable}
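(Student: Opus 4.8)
The plan is to account separately for the three nested loops in the final algorithm: the $O(\log(1/\delta))$ outer repetitions of Algorithm~\ref{alg:multiobjective}, the $B$ greedy iterations within each repetition, and the MWU rounds used to solve each $\mathrm{LP}(S)$ approximately (Algorithm~\ref{alg:mwu}). First I would recall from Section~\ref{sec:lazy} that solving $\mathrm{LP}(S)$ to the required accuracy amounts to computing an approximate equilibrium of the two-player game with gains $\ell_c(v) = B f_c(v\mid S) + f_c(S)$; by the standard MWU analysis~\cite{Arora+12}, $O(\rho^2 \log k / \eta^2)$ rounds suffice to get an additive $\eta$-approximation, where $\rho$ is the width (the maximum possible value of a gain). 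I would argue that after the pre-processing step (Algorithm~\ref{alg:pre-processing}) the width satisfies $\rho = O(\mathrm{OPT})$ and the additive error we can tolerate is $\eta = \Theta(\epsilon\, \mathrm{OPT}/B)$, so that the number of MWU rounds per $\mathrm{LP}$ is $O(B^2 \log(k)/\epsilon^2)$. Each MWU round requires one best-response computation, i.e.\ finding $v\in V$ maximizing $\sum_{c} y_c \ell_c(v)$; naively this is $O(nk)$ marginal-gain evaluations per round, but I would then invoke the lazy-evaluation argument to say that across all $B$ greedy iterations of one repetition the \emph{total} number of function evaluations is only $O(nBk)$ rather than $O(nBk)$ per iteration — this is where the $O(nBk\log(1/\delta))$ evaluation bound comes from, summing over the $O(\log(1/\delta))$ repetitions.

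For the running time I would multiply through: $O(\log(1/\delta))$ repetitions, times $B$ greedy iterations, times $O(B^2\log(k)/\epsilon^2)$ MWU rounds, times the cost of one MWU round. A round updates the $k$ weights (cost $O(k)$) and computes a best response; charging the best-response cost carefully — $O(n)$ work per color to scan marginal gains, or $O(n\log n + k)$ amortized with the lazy heap — gives round cost $O(nk)$ in the worst case. Collecting terms yields $O\big(nB^3 \frac{1}{\epsilon^2} k \log(k)\log(1/\delta)\big)$, matching the claimed bound; I would note that the pre-processing of Algorithm~\ref{alg:pre-processing} contributes only a lower-order $O(nk\log(1/\delta))$ or so and can be absorbed. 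The function-evaluation count follows the same bookkeeping but stops at the best-response scans, since MWU weight updates and LP arithmetic involve no oracle calls.

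The main obstacle is bounding the number of MWU rounds, which hinges on two quantities: the width $\rho$ of the game and the additive accuracy $\eta$ that still suffices for the concentration argument of Lemma~\ref{lem:individual-concentration} to go through. For the width, I must use that pre-processing removes every element with $f_c(v\mid\emptyset)$ large relative to $\mathrm{OPT}$, so that $B f_c(v\mid S) + f_c(S) = O(\mathrm{OPT})$ uniformly — this is exactly the role played by the condition $\gamma\widetilde{\mathrm{OPT}}_B$ in Lemma~\ref{lem:opt-change}, and I would need to track how the required $B = \Omega(k\log k/\epsilon^3)$ feeds back into making $\rho/\eta = O(B/\epsilon)$ rather than something larger. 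For the accuracy, I would re-examine the proof of Theorem~\ref{thm:multiobjective}: replacing the exact LP optimum $\xi^* \ge \mathrm{OPT}$ by an approximate one $\xi^* - \eta$ weakens the per-step guarantee of Lemma~\ref{lem:expectation-is-good} by $\eta$, which after $B$ steps and the $(1-1/e)$-telescoping costs $O(B\eta/\mathrm{OPT})$ in the final ratio; demanding this be $O(\epsilon)$ pins down $\eta = \Theta(\epsilon\,\mathrm{OPT}/B)$. Making these two estimates mesh cleanly — and confirming they do not force a stronger condition on $B$ than the one already stated in Theorem~\ref{thm:final} — is the delicate part; the rest is routine multiplication of loop lengths.
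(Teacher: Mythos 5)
Your accounting matches the paper's proof essentially step for step: $O(\log(1/\delta))$ repetitions times $B$ greedy iterations times $O(B^2\log(k)/\epsilon^2)$ MWU rounds (obtained, as in the paper's Lemma on the MWU iteration count, from the width/accuracy ratio $\rho/\eta = O(B/\epsilon)$ after pre-processing bounds the marginal gains) times $O(nk)$ work per round, with the function-evaluation count collapsing to $O(nBk\log(1/\delta))$ because marginal gains are cached per greedy iteration and reused across MWU rounds, and with the tolerance $\eta=\Theta(\epsilon\,\mathrm{OPT}/B)$ justified exactly as in the paper's lemma showing an approximate LP solution loses only $O(\epsilon)$ in the final ratio. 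This is the paper's argument; the only blemish is the typo ``$O(nBk)$ rather than $O(nBk)$ per iteration,'' where you clearly mean $O(nk)$ per greedy iteration rather than per MWU round.
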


We also defer this proof to Appendix~\ref{sec:appendix-lazy}.
We compare this with the $(1 - \frac 1 e - \epsilon)$-approximation algorithm of
\citet{Udwani18}, which has a similar condition on $B$ but needs
$\tilde O(k n^8)$ function evaluations.
The faster $((1 - \frac 1 e)^2 - \epsilon)$-approximation algorithm requires
$\tilde O(n \frac 1 {\epsilon^3})$ function evaluations by repeatedly solving
submodular maximization problems using a nearly-linear time algorithm
developed by \citet{Mirzasoleiman+15}.
Furthermore, the $(1- \frac 1 e - \epsilon)$-approximation algorithm of \citet{Tsang+19} needs
$\tilde O(\frac{k B^2}\epsilon + \frac{B^4}{\epsilon^5})$ evaluations of the multilinear extension and
its gradient, and needs $O(\frac{n B^2}{\epsilon^2} + \frac{kB^2}{\epsilon} + \frac{B^3}{\epsilon^2})$ additional time.
Note that the evaluation of the multilinear extension is generally not tractable
and usually estimated via sampling~\cite{Salem+23}.
Only special structure of the submodular functions allows for more efficient
evaluations.

\section{Fair Centrality Maximization}
\label{sec:fcm}

We now introduce our novel application
which furthers the applicability of
algorithms for multiobjective submodular maximization.
Centrality measures, quantifying the importance of nodes or edges in a network, play a key
role in network analysis~\cite{Das+18}. In many real-world scenarios, we want to optimize 
the centrality score of a target node by intervening in the
structure of the network (e.g., by adding or removing edges around the node). 
In particular, \emph{centrality maximization} is a well-studied task, 
where given a directed graph $G=(V,A)$, a target node
$v\in V$, and a budget $B\in \mathbb{Z}_{>0}$, we are asked to insert at most
$B$ edges heading to $v$ that maximize the centrality score of $v$ (see e.g., \cite{Bergamini+18,Crescenzi+16,DAngelo+19,Ishakian+12,Medya+18}).

Among existing centrality measures, the \emph{harmonic centrality} is one of the most well-established~\cite{Boldi+14}. 
The harmonic centrality score of $v\in V$ is defined as 
\begin{align*}
  h_G(v)=\sum_{u\in V\setminus \{v\}}\frac{1}{d_G(u,v)}
\end{align*}
where $d_G(u,v)$ is the shortest-path distance from $u$ to $v$ on $G$. 
Intuitively, the score quantifies the importance of nodes based on the
level of reachability from the other nodes. 
\citet{Boldi+14} showed
that unlike previously known centrality measures, the harmonic centrality
satisfies all the desirable axioms, namely the size axiom, density axiom, and
score monotonicity axiom. Recently, \citet{Murai+19}
theoretically and empirically demonstrated that among well-known centrality
measures, the harmonic centrality is most stable and thus reliable against the
structural uncertainty of networks.

In many real-world networks, nodes are not uniform but varied in terms of
attributes. Examples include social networks, where nodes have sensitive
attributes such as race, gender, religion, or even political opinions. Suppose
that we have a network in which each node has a categorical attribute such as
race or gender. Centrality maximization without any consideration of the
variation of node attributes may lead to undesirable outcomes. 
Indeed, as the objective function, i.e., the centrality measure employed, does not take
into account the variation of node attributes, it cannot distinguish between
parts of centrality scores of the target node corresponding to nodes with
different attribute values. Therefore, even if the centrality score of the
target node is maximized, the node might still not be sufficiently visible to
nodes with some specific attribute value. This is problematic, for instance, in the case where
we use centrality maximization to improve the visibility of public health
agency's accounts in social media platform. Such accounts should be sufficiently
visible even to minority users.

We therefore study centrality maximization with fairness considerations, taking into
account the variation of node attributes.
To this end, we introduce a novel centrality measure, which we refer to as the
\emph{fair harmonic centrality}. This measure is a generalization of the
aforementioned harmonic centrality, 
which 
contributes to finding a fair solution in terms of node attributes.
Let now $C$ be a set of colors, i.e., attribute values. 
Let $\ell \colon V\rightarrow C$ be a mapping that assigns each node to a color. 
For each $c\in C$, define $V_c=\{v\in V\mid \ell(v)=c\}$.
We define the fair harmonic centrality based on the maximin
fairness~\cite{Rahmattalabi+19} as follows:
\begin{align}
\label{eq:18}
  h^\mathrm{min}_G(v)=\min_{c\in C} \frac{1}{|V_c\setminus \{v\}|} \sum_{u\in V_c\setminus \{v\}}\frac{1}{d_G(u,v)}. 
\end{align}
This represents the minimum value among all parts of the harmonic centrality score of $v$ 
corresponding to nodes with different colors, normalized by their populations.
Clearly, the above is a generalization of the original harmonic centrality.
Based on this measure, we formulate:

\begin{problem}[Fair Centrality Maximization]
\label{prob:fcm}
Given a directed graph $G=(V,A)$, a mapping $\ell \colon V\rightarrow C$, a target node $v\in V$, and a budget $B\in \mathbb{Z}_{>0}$, we are asked to find 
$F\subseteq \{u\in V\mid (u,v)\notin A\}\times \{v\}$ 
with $|F|\leq B$ that maximizes 
$h^\mathrm{min}_{(V,A\cup F)}(v)$. 
\end{problem}

This is a special case of the multiobjective submodular maximization (Problem~\ref{prob:multiobjective}), 
since every term in the minimum in \eqref{eq:18} is submodular
in $F$, as in the original harmonic centrality~\cite{Crescenzi+16}.
Existing algorithms for multiobjective submodular maximization either
offer weak theoretical guarantees
or rely on the
multilinear extension of the objective function; however, the evaluating the latter
is computationally expensive when considering our fair centrality maximization problem
for large-scale networks. With our algorithm, we are the first to solve
multiobjective submodular maximization on this scale with
the strongest~guarantee.

\section{Experimental Evaluation}\label{sec:experiments}

We evaluate our algorithm, natural greedy baselines, and other
prior work on
synthetic and real-world instances for
max-$k$-cover, fair centrality maximization, and fair influence maximization.
We run our experiments in Python 3 on a ThinkPad X1 Carbon with an Intel Core i7-1165G7 CPU
and 16GB of RAM. Our code is publicly
available.\footnote{\url{https://github.com/285714/multiobjective}}

\subsection{Algorithms}

We use our algorithm and several baseline algorithms with and without theoretical guarantees.
For baselines that require an estimate $\OPT'$ to the optimal value,
we use an outer loop that performs a binary search for $\OPT'$.
Note that our algorithm does not require such an estimate.

\paragraph{LP Greedy}
We use our Algorithm~\ref{alg:multiobjective} with 20 repetitions of
the outer loop to boost the success probability, and solve
LP~\eqref{eq:6} using Gurobi Optimizer 11.0.1~\cite{gurobi}. We facilitate lazy evaluations
differently from the
MWU described in Algorithm~\ref{alg:mwu}:
In each iteration $i$, we solve $\mathrm{LP}(S^{(i-1)})$, but using the upper
bounds $g_c(v)$ in place of the real marginal gains. Whenever the solution
$x \in \Delta_V$ to the LP places mass $x_v > 0$ on an element $v \in V$, we
evaluate and update $g_c(v) = f_c(v \mid S^{(i-1)})$. We then solve the LP again
and repeat this until the real marginal gains of all elements $v \in V$ with $x_v > 0$
are evaluated. We omit the pre-processing described in
Algorithm~\ref{alg:pre-processing}.
Algorithm~\ref{alg:multiobjective} may act overly conservative in its
selection to preserve the approximation ratio for difficult instances.
We thus modify the left-hand-side of the first constraint of LP \eqref{eq:6} to be 
\[
  \sum_{v \in V} x_v \left( B f_c(v \mid S^{(i-1)}) + \phi f_c(S^{(i-1)}) \right)
\]
where $\phi > 1$ is a factor that controls the greediness of our algorithm:
For larger $\phi$, our algorithm prefers to increase the color that currently
has the least value, instead of picking a distribution $x$ that leads to a
balanced increase.
Throughout, we use a factor $\phi=10$.
We include an ablation study for the
number of repetitions of the outer loop and the factor~$\phi$.

\paragraph{Greedy heuristics}
We use two greedy heuristics. First, we use the
heuristic described by \citet{Udwani18} (\textsc{Greedy Round Robin}):
In $1 \le i \le B$ iterations, we add one element
after another to the solution $S^{(i-1)}$. We select the $i$-th element as
$v^{(i)} = \arg\max_{v \in V} f_{c^*}(v \mid S^{(i - 1)})$ where
$c^* = i \mod k$. We use another greedy heuristic (\textsc{Greedy Minimum}),
where $c^* = \arg\min_{c\in C} f_c(S^{(i-1)})$.

\paragraph{Saturate}
We use the bi-criteria algorithm \textsc{Saturate} due to~\citet{Krause+08}
as a heuristic with fixed budget.
The algorithm uses a guess $\OPT'$ to the optimal value and
optimizes greedily over the submodular function
$f_{\textsc{Saturate}}(S) = \sum_{c \in C} \min \{f_c(S) , \OPT'\}$.

\paragraph{Udwani's MWU}
We use the efficient $(1 - \frac 1 e)^2$-approximation described by \citet{Udwani18}
(\textsc{Udwani MWU}).
We follow his implementation details and also omit the pre-processing. We use 100 iterations with a step size of $\eta=0.1 \cdot \OPT'$.
We use the lazy greedy implementation to solve the inner maximization problem.

\paragraph{Continuous Frank-Wolfe for influence maximization}
We use the algorithm of \citet{Tsang+19} and their implementation. The
algorithm relies on the fast evaluation of the multilinear extension
for fair influence maximization, and thus we use it only for fair influence
maximization.

\subsection{Max-$k$-Cover}

\begin{figure}
\centering
\includegraphics[width=.5\linewidth]{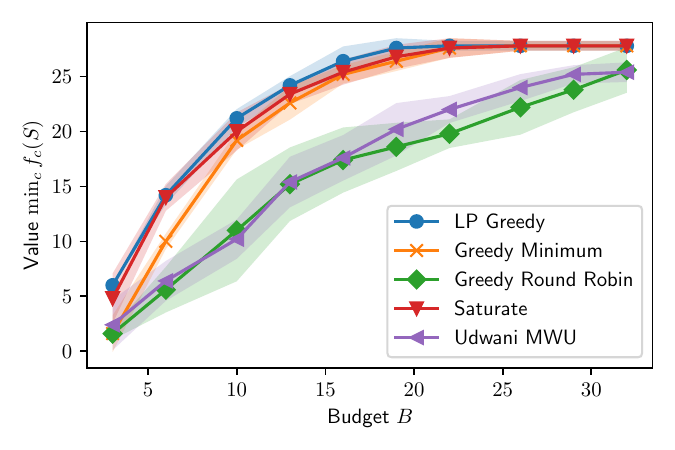}~
\includegraphics[width=.5\linewidth]{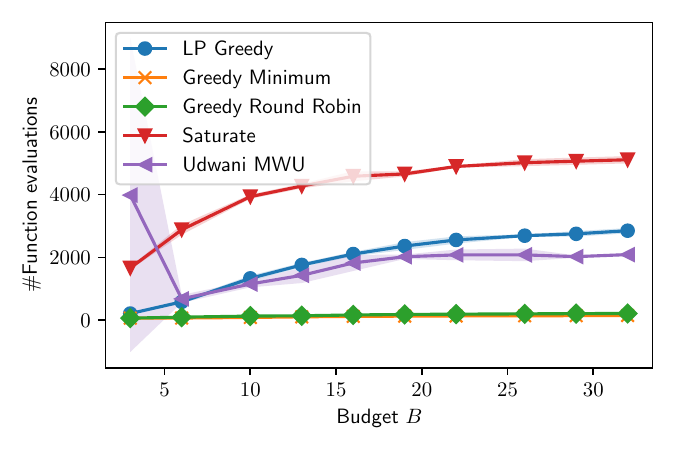}
\caption{
  Multiobjective submodular maximization for max-$k$-cover. We use $k=20$
  Kronecker graphs on $n=64$ nodes. We show the function value (left) and
  the number of evaluations (right). We report mean and standard deviation
  over 5 random instances.
  }
\label{fig:coverage-kronecker}
\end{figure}

First, we replicate the setup of \citet{Udwani18} for max-$k$-cover instances:
Here, we generate $k$ random synthetic graphs $\{G_c=(V,E_c)\}_{c \in C}$ on a fixed
vertex set $V$. The cover size of $U\subseteq V$ on graph $G_c$ is $f_c(U) = |N_{G_c}(U)|$ 
where $N_{G_c}(U)=\{\{u,v\}\in E_c : u\in U \text{ or } v\in U\}$.
We use $k=20$ stochastic Kronecker graphs on $n=64$ nodes which reflect
real-world networks and are detailed in \citet{Udwani18} and \citet{Leskovec+10}. Our
results are in Figure~\ref{fig:coverage-kronecker} and
we include further experiments on other graphs in Appendix~\ref{sec:appendix-max-coverage}.
This shows that we achieve the highest objective with
fewer function evaluations than other algorithms, particularly compared with
\textsc{Udwani MWU} which is the only other algorithm with theoretical guarantees.

\subsection{Fair Centrality Maximization}

\begin{figure}
\centering
  \includegraphics[width=0.5\linewidth]{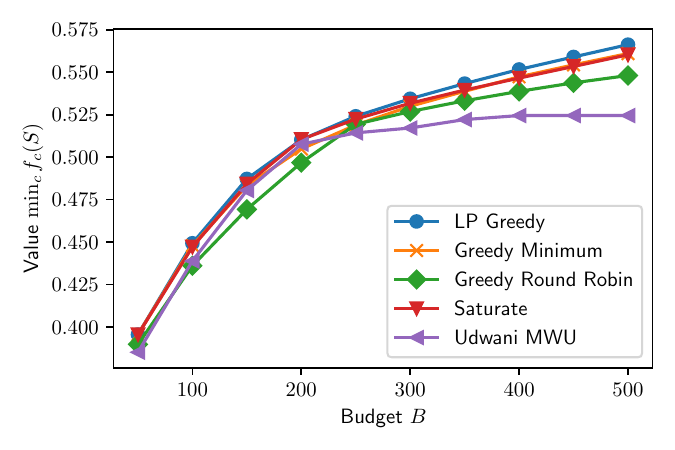}~
  \includegraphics[width=0.5\linewidth]{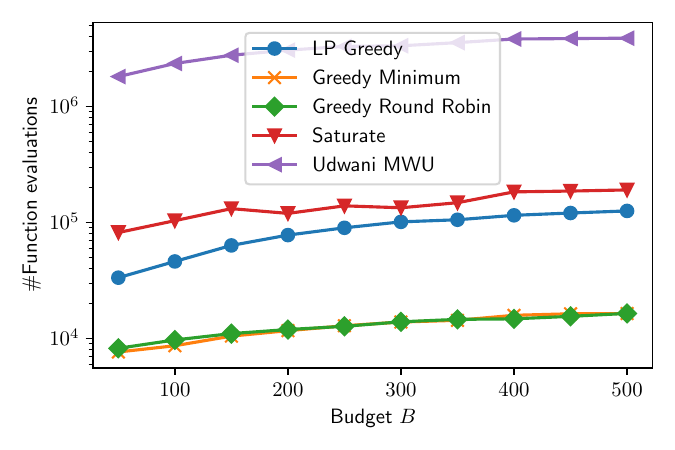}
  \caption{
  Fair centrality maximization on the Amazon co-purchasing graph
  \emph{Arts, Crafts \& Sewing} with $n=5051$ nodes and $k=2$ colors.
}
\label{fig:fcm-amazon1}
\end{figure}

We use Amazon co-purchasing networks used by \citet{Anagnostopoulos+20} and \citet{Miyauchi+23} along with
the color attributes which represent product categories to obtain instances
for our fair centrality maximization problem as defined in Section~\ref{sec:fcm}.
The networks are available online \cite{amazondatasets} and contain
large graphs of up to $10\ 380$ nodes and $53\ 680$ edges. 
We select an arbitrary target node among the nodes with median degree.
Figure~\ref{fig:fcm-amazon1} shows results for a single network called \emph{Arts, Crafts \& Sewing} 
and we include results for other networks in Appendix~\ref{sec:appendix-fair-centrality-max}.
Throughout, our algorithm achieves the highest objective value with fewer function evaluations than
\textsc{Udwani MWU} and \textsc{Saturate}, which are the strongest competitors.
This is mainly because, compared to these two algorithms, our method does not
require a binary search to determine a good guess to the optimal value.

\subsection{Fair Influence Maximization}

\begin{figure}
\centering
    \includegraphics[width=.5\linewidth]{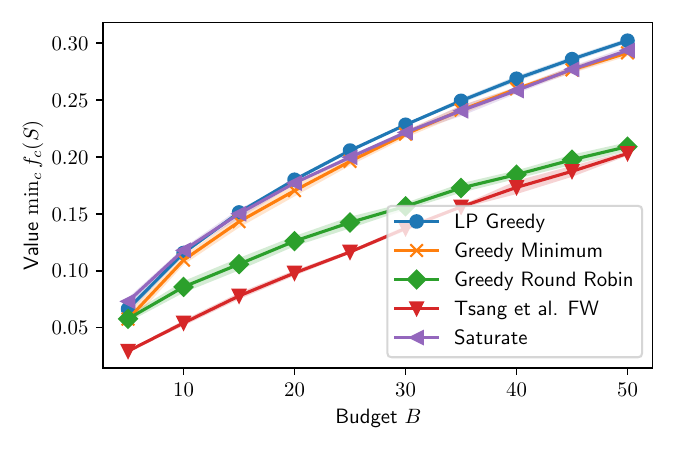}~
  \includegraphics[width=.5\linewidth]{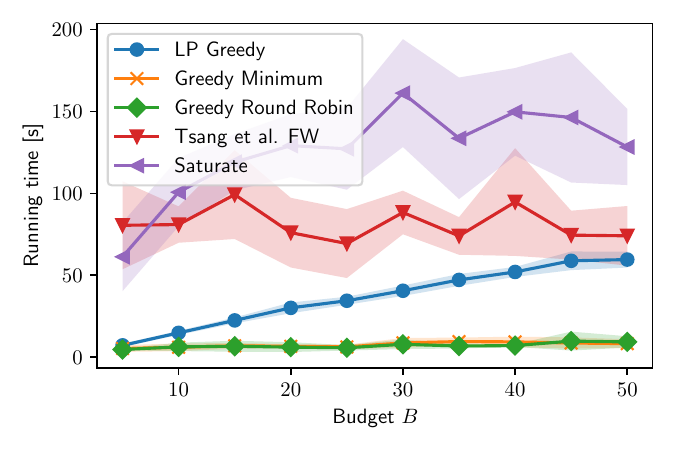}
  \caption{
  Fair influence maximization on a simulated Antelope Valley network of
  $n\!=\!500$ nodes on attribute \emph{ethnicity} with~$k\!=\!5$.
}
\label{fig:infmax-ethnicity}
\end{figure}

Finally, we replicate the setup of \citet{Tsang+19} for their
influence maximization objective, where the diffusion follows the independent
cascade model~\cite{Kempe+03}.
They define an instance for multiobjective submodular maximization via the colored influence
$f_c(S)\!=\!\frac 1 {|V_c|} \E[\textrm{number of nodes with color $c$ that $S$ activates}]$.
We use their simulated Antelope Valley networks on $n=500$ nodes which can be
colored according to different node attributes. As in \citet{Tsang+19}, we use $1\ 000$
samples to approximate the influence.
Figure~\ref{fig:infmax-ethnicity} shows results for the attribute
\emph{ethnicity}, where we outperform the prior work especially
for large budgets. We omit \textsc{Udwani MWU} as it
exceeds 10 minutes per run.
The algorithm of \citet{Tsang+19} uses special
evaluation oracles for the gradient of the influence which do not correspond
to function evaluations, so we report the running time instead.
Recall that their algorithm is specialized to influence maximization, for which
we can evaluate the multilinear extension efficiently.
Our algorithm applies to any multiobjective instance.
We show results for further attributes in Appendix~\ref{sec:appendix-fair-influence-max}, which includes
one instance where the heuristic \textsc{Saturate} outperforms our algorithm
for a small budget. Results for the remaining graphs in \citet{Tsang+19} are
similar, so we omit them.

\subsection{Ablation Study}

\begin{figure}
\centering
\includegraphics[width=0.5\linewidth]{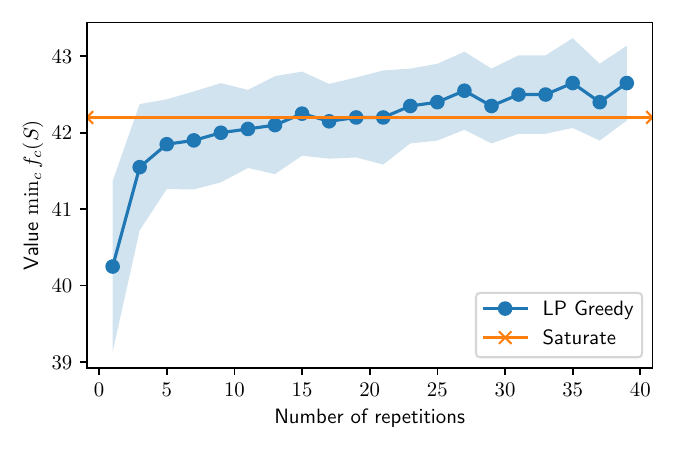}~
\includegraphics[width=0.5\linewidth]{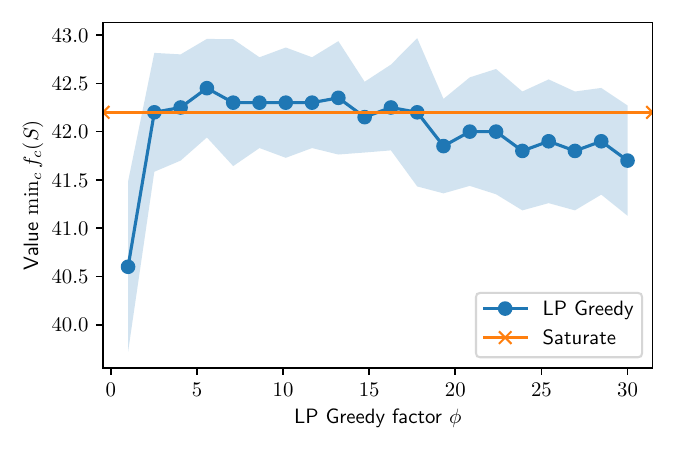}
\caption{
  We run Algorithm~\ref{alg:multiobjective} on a max-$k$-cover instance of
  $k=20$ Erd\H{o}s-R\'enyi random graphs on $n=64$ nodes with $p=0.1$.
  For the left plot, we vary the number of repetitions while using $\phi=10$.   For the right plot, we vary the factor $\phi$ while using $20$ repetitions.
  We report mean and standard deviation over 5 runs.
}
\label{fig:ablation}
\end{figure}

Figure~\ref{fig:ablation} provides an ablation study for the number of
repetitions and the factor $\phi$ which makes the algorithm act
less conservatively. We use a simple synthetic instance of Erd\H{o}s-R\'enyi random
graphs for the max-$k$-cover problem. Our results show that after $20$
repetitions, our algorithm has surpassed the objective value of
\textsc{Saturate} and more repetitions do only contribute to a slight increase
in objective value. The right plot shows that a factor of approximately
$\phi \in [5, 15]$ results in the highest objective value.

\section{Conclusion}\label{sec:conclusion}

We introduce the first scalable algorithm for multiobjective
submodular maximization that achieves a
$(1 - \frac 1 e - \epsilon)$-approximation ratio.
Going beyond
fair centrality maximization and influence maximization, many problems in
fairness naturally admit formulations as multiobjective problems. The high
scalability and theoretical guarantees of our algorithm make it a well-suited
option to address such problems.
Further, we avoided the common methodology of optimizing in the continuous space
and rounding the resulting solution.
Our novel techniques may lead to improvements in areas even beyond the
multiobjective problem.

\bibliographystyle{plainnat}
\bibliography{main}

\newpage
\appendix

\section{Omitted Details and Proofs}
\label{sec:appendix-omitted-proofs}

\subsection{Missing Proofs from Section~\ref{sec:analysis}}

\expectationisgood*

\begin{proof}
  This follows directly by our algorithm design and the definition of
  expectation:
  \begin{align*}
    \mathbb E[ f_c(v^{(i)} \mid S^{(i-1)}) \mid S^{(i-1)}] &=
    \sum_{v \in V} \Pr[v^{(i)} = v \mid S^{(i-1)}] f_c(v \mid S^{(i-1)}) \\
    &= \sum_{v \in V} x_v^{(i)} f_c(v \mid S^{(i-1)}) \\
    &\ge \frac 1 B \left( \OPT - f_c(S^{(i-1)}) \right) .
  \end{align*}
\end{proof}

\expectedvalue*

\begin{proof}
  We do the unrolling as in the standard greedy analysis, but in expectation
  over the result of Lemma~\ref{lem:expectation-is-good}. That is,
  Lemma~\ref{lem:expectation-is-good} is equivalent to
  \begin{align*}
    \E[f_c(S^{(i)}) - \OPT + \OPT - f_c(S^{(i-1)}) \mid S^{(i-1)}] &\ge
    \frac 1 B \left( \OPT - f_c(S^{(i-1)}) \right) \\
    \iff \E[\OPT - f_c(S^{(i)}) \mid S^{(i-1)}] &\le
    \left( 1 - \frac 1 B \right) \left( \OPT - f_c(S^{(i-1)}) \right) .
  \end{align*}
  Now, unrolling this yields
  \begin{align*}
    \E[\OPT - f_c(S^{(B)})] &=
    \E_{S^{(B-1)}}\left[\E_{v^{(B)}}[\OPT - f_c(S^{(B)}) \mid S^{(B-1)}]\right] \\
    &\le \left(1 - \frac 1 B\right)\E_{S^{(B-1)}}\left[\OPT - f_c(S^{(B-1)})\right] \\
    &\le \cdots \\
    &\le \left(1 - \frac 1 B\right)^B \OPT \\
    &\le \frac 1 e \OPT, 
  \end{align*}
  which is equivalent to $\E[f_c(S^{(B)})] \ge (1 - \frac 1 e) \OPT$. 
\end{proof}

\subsection{Details about Removing the Condition on $\OPT$}
\label{sec:appendix-removing-condition}

\begin{algorithm2e}[t]
\KwIn{Monotone and submodular set functions $f_c \colon 2^V \to \R$ for $c \in C$ and per-color budget $B'$}
$T \gets \emptyset$\;
\For{$c \in C$}{
  \For{$i = 1, 2, \dots, B'$}{
    Let $v = \arg\max_{v \in V} f_c(v \mid T)$ \;
    Update $T \gets T \cup \{v\}$ \;
  }
}
\Return{$T$}.
\caption{Pre-Processing for Algorithm~\ref{alg:multiobjective}}
\label{alg:pre-processing-no-opt}
\label{alg:pre-processing}
\end{algorithm2e}

Before going into the analysis of Algorithm~\ref{alg:pre-processing} we want
to again motivate our algorithm design.
Instances can violate the condition $\OPT \ge \frac 4 {\epsilon^2} M \log(2k)$
only in two ways. First, if the budget is small, it may happen that the optimum
solution is not much larger than $M$. For instance, in the extreme case in which
$B = 1$, we also have $\OPT = M$ and necessarily violate the condition for
$\OPT$. Second, even if we have a large budget, the optimum solution value may
still not be much larger than $M$ due to submodularity, even for a single
color.
However, we will show
that the latter case is not an issue, and we only require a sufficiently large
budget. This rests on the fact that cases in which $M$ is large compared to
$\OPT$, there are only few elements with large marginal gain.
We can find those elements in a pre-processing step and add them to our
solution before running Algorithm~\ref{alg:multiobjective}.
The remaining elements can only have low marginal gain, which then allows us to
obtain a better concentration.

Recall that $\tilde C = \{c \in C : f_c(T) < \OPT \}$ where $T$ is the output
of Algorithm~\ref{alg:pre-processing-no-opt}.
Recall also that we define a modified instance where
$\tilde f_c(A) = f_c(A \cup T)$ for all $c \in \tilde C$ and $A \subseteq V$.

\begin{lemma}
  \label{lem:pre-processing-no-opt}
  \label{lem:pre-processing}
  Algorithm~\ref{alg:pre-processing-no-opt} outputs a set $T$ of size
  $|T| \le k B'$. Furthermore,
  $\tilde f_c(v \mid \emptyset) \le \frac \OPT {B'}$ for all $c \in \tilde C$ and $v \in V$.
\end{lemma}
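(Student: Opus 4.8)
The plan is to verify the two claims separately. The size bound is immediate: Algorithm~\ref{alg:pre-processing-no-opt} runs an outer loop over the $k$ colors, and for each color an inner loop of $B'$ iterations, each of which adds at most one element to $T$. Hence $|T| \le k B'$, with equality unless some greedily-chosen element is already present in $T$.

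For the marginal-gain bound, fix a color $c \in \tilde C$ and an element $v \in V$. Let $T_0 \subseteq T_1 \subseteq \cdots \subseteq T_{B'}$ be the intermediate sets produced during the $B'$ iterations of the inner loop for color $c$, where $T_0$ is the value of $T$ when that loop begins and $T_{B'}$ is its value when the loop ends. In iteration $j$ the algorithm picks $v_j = \arg\max_{u \in V} f_c(u \mid T_{j-1})$, so in particular $f_c(v_j \mid T_{j-1}) \ge f_c(v \mid T_{j-1}) \ge f_c(v \mid T)$, where the last inequality is submodularity since $T_{j-1} \subseteq T$. Summing over $j = 1, \dots, B'$ and telescoping gives
\[
  f_c(T_{B'}) - f_c(T_0) = \sum_{j=1}^{B'} f_c(v_j \mid T_{j-1}) \ge B' \cdot f_c(v \mid T).
\]
Now use monotonicity and $f_c(T_0) \ge 0$ to bound the left side: $f_c(T_{B'}) - f_c(T_0) \le f_c(T_{B'}) \le f_c(T)$. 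Since $c \in \tilde C$ we have $f_c(T) < \OPT$, so $B' \cdot f_c(v \mid T) \le f_c(T) < \OPT$, i.e. $\tilde f_c(v \mid \emptyset) = f_c(v \mid T) < \OPT / B'$, which (with a non-strict inequality, absorbing the boundary case) is the claimed bound.

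The only subtlety — and the main thing to get right — is the bookkeeping of which intermediate sets to compare against: we must exploit that the greedy choices made \emph{for color $c$} all have marginal gain on top of a subset of the final $T$, so that each of the $B'$ such gains individually dominates $f_c(v \mid T)$; the gains contributed by iterations for \emph{other} colors are simply discarded (they are nonnegative by monotonicity, which is why dropping them is valid). Everything else is a one-line application of submodularity, monotonicity, nonnegativity, and the definition of $\tilde C$.
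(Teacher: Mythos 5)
Your proof is correct and follows essentially the same route as the paper's: greedy maximality of each $v_j$ plus submodularity gives $f_c(v_j \mid T_{j-1}) \ge f_c(v \mid T)$, the telescoping sum is bounded by $f_c(T) < \OPT$ using nonnegativity, monotonicity, and the definition of $\tilde C$. The only cosmetic difference is that the paper splits off the case $v \in T$ separately, whereas your chain of inequalities handles it uniformly.
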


\begin{proof}
  By the definition of Algorithm~\ref{alg:pre-processing-no-opt}, we add at
  most $B'$ elements per color, meaning the set $T$ has size at most $k B'$.
  To show the second claim,
  let now $c \in \tilde C$ and $v \in V$ be any element.
  If $c \in T$ then $\tilde f_c(v \mid \emptyset) = f_c(v \mid T) = 0$ and we
  are done. We may thus assume that we did not add $c$ to $T$.
  For the proof, let us now denote with $T_c$ the elements that are added to $T$
  for color $c$, and let $T_c^{(i)}$ denote $T_c$ at the end of iteration $i$.
  Since we did not add $v$ to $T_c$, the greedy selection implies that
  $f_c(v \mid T) \le f_c(v \mid T^{(i-1)}) \le f_c(v^{(i)} \mid T^{(i-1)})$
  for all $1 \le i \le B'$. Hence,
  \[
    f_c(v \mid T)
    \le \frac 1 {B'} \sum_{i=1}^{B'} f_c(v^{(i)} \mid T^{(i-1)})
    = \frac 1 {B'} f_c(T^{(B')} \mid T^{(0)})
    \le \frac 1 {B'} f_c(T)
    \le \frac 1 {B'} \OPT
  \]
  where the penultimate inequality is by submodularity, and the last inequality
  due to $c \in \tilde C$.
\end{proof}

In the end, we want to show that $T \cup \tilde S$ attains a good fraction
of the optimum solution. Clearly, Lemma~\ref{lem:pre-processing} shows that
we can effectively reduce the maximum singleton marginal gain which is
sufficient to improve the concentration in Theorem~\ref{thm:multiobjective}.
However, Algorithm~\ref{alg:multiobjective-simple} now runs with a reduced
budget $\tilde B < B$. We thus need to show that even with a reduced budget, we
can still get a good fraction of $\OPT$ as long as $\tilde B$ is sufficiently
large.

\optchange*

\begin{proof}
  We use a probabilistic argument.
  As before, we abuse notation and use $\widetilde \OPT_b$ to denote both
  the optimum solution and its value.
  Let $x^* \in \{0,1\}^V$ be such that
  $x^*_v = 1$ if $v \in \widetilde \OPT_B$ and $x^*_v = 0$ otherwise. 
    We define $\tilde x = (1 - \frac{\tilde B} B) x \in [0,1]^V$ as a similar
  vector but with a reduced budget.
  Let $\tilde F_c \colon [0,1]^V \to \mathbb R$ be the multilinear extension
  of $\tilde f_c$ for each $c \in \tilde C$.
  Since $x^* \in \{0,1\}^V$ we have $\tilde f_c(\widetilde \OPT_B) = \tilde F_c(x^*)$.
  As argued in Lemma 3 of Udwani's work~\cite{Udwani18}, we have
  \begin{align}
    \label{eq:9}
    \tilde F_c(\tilde x) \ge \frac {\tilde B} B \tilde F_c(x^*)
    = \frac{\tilde B} B \tilde f_c(\widetilde \OPT_B), 
  \end{align}
  which is due to Jensen's inequality and since the multilinear extension
  is concave in positive direction. We now use swap rounding~\cite{Chekuri+10}
  to obtain a set $\tilde S \subseteq V$ from $\tilde x$ of size
  $|\tilde S| = \tilde B$.  Importantly,
  swap rounding is an oblivious rounding scheme, meaning it does not evaluate
  $\tilde f_c$.     As such, we do not create a rounded solution specific to
  $\tilde f_c$, but the guarantee on the rounded solution holds for all
  $c \in \tilde C$. We have the following guarantee~\cite{Chekuri+10} for a
  rounded solution $\tilde S$, for all $\delta > 0$,
  \begin{align}
    \label{eq:10}
    \Pr[\tilde f_c(\tilde S) \le (1 - \delta) \tilde F_c(\tilde x)]
    \le \exp\left( - \frac{\tilde F_c(\tilde x) \delta^2}{2 M_c} \right)
    \le \exp\left( - \frac{\tilde B \tilde f_c(\widetilde \OPT_B) \delta^2}{2 B \gamma \widetilde \OPT_B} \right)
    \le \exp\left( - \frac{\delta^2 \tilde B}{2 \gamma B} \right) .
  \end{align}
    For the second inequality, we used that
  $M_c = \max_{v \in V} \tilde f_c(v \mid \emptyset) \le \gamma \widetilde \OPT_B$
  and Inequality~\eqref{eq:9}. For the last inequality, we used that
  $\tilde f_c(\widetilde \OPT_B) \ge \widetilde \OPT_B$.
  We need that the RHS of Inequality~\eqref{eq:10} is less than $1 / k$ to
  guarantee that the rounded solution exceeds $\tilde F_c(\tilde x)$ for all
  $c \in \tilde C$. In particular, we choose
  \[
    \delta = \sqrt{3 \gamma \frac B {\tilde B} \log k} 
    \implies
    \exp\left( - \frac{\delta^2 \tilde B}{2 \gamma B} \right) < \frac 1 k .
  \]
    By a union bound over all $c \in C$,
  \[
    \Pr\left[\textrm{there exists a $c \in \tilde C$ with } \tilde f_c(\tilde S) \le
    (1 - \delta) \tilde F_c(\tilde x)\right] < k \cdot \frac 1 k = 1 .
  \]
    As such, the event that $\tilde S$ obtains value
  $\tilde f_c(\tilde S) \ge (1 - \delta) \tilde F_c(\tilde x)$ for all $c \in C$
  simultaneously has non-zero probability, meaning that such a set $\tilde S$
  exists. Let $\tilde S$ now be this set. We have for all $c \in \tilde C$ that
  \[
    \widetilde \OPT_{\tilde B} \ge
    \tilde f_c(\tilde S) \ge (1 - \delta) \tilde F_c(\tilde x)
    \ge (1 - \delta) \frac {\tilde B} B \widetilde \OPT_B .
  \]
\end{proof}

We run the pre-processing Algorithm~\ref{alg:pre-processing}
for a value $B > 0$ which we will specify later in
Theorem~\ref{thm:final}.
This outputs a set of colors $\tilde C \subseteq C$ and 
a partial solution $T \subseteq V$, and we run Algorithm~\ref{alg:multiobjective}
use objective functions defined as $\tilde f_c(A) = f_c(A \cup T)$ for $c \in \tilde C$ 
and the budget $\tilde B = B - |T|$.
Since for $c \in C \setminus \tilde C$ we have already
reached the optimum value, we simply ignore those colors and do not pass
them to Algorithm~\ref{alg:multiobjective}.
Algorithm~\ref{alg:multiobjective} outputs a set $\tilde S \subseteq V$
and the final solution of the combined algorithm is $T \cup \tilde S$.

We can now put everything together to get the guarantee on the combined
algorithm that only requires a large budget.

\final*

\begin{proof}
  We use $B = \frac 1 \gamma$ where $\gamma = \frac {\epsilon^2}{36 \log k}$ for
  Algorithm~\ref{alg:pre-processing} which yields a partial solution $T$
  and colors $\tilde C \subseteq C$.
  We show the theorem statement for all $c \in C$ separately.
  If $c \not\in \tilde C$, we have by monotonicity that
  \[
    f_c(T \cup \tilde S) \ge f_c(T) \ge \OPT .
  \]
  Let now $c \in \tilde C$.
  After the pre-processing, we have by
  Lemma~\ref{lem:pre-processing} that
  $|T| \le \frac k \gamma$ so $\tilde B = B - |T| \ge B - \frac k \gamma$ and
  hence, by Lemma~\ref{lem:opt-change},
  \begin{align}
    \label{eq:12}
    \widetilde \OPT_{\tilde B}
    \ge \widetilde \OPT_B \left( 1 - \frac k {B \gamma} \right) \left(1 -
      \sqrt{3 \gamma \frac{1}{1 - \frac k {B \gamma}} \log k} \right) .
  \end{align}
  Now, since $B \ge \frac {3 k} {\epsilon \gamma}$
  we get
  $\frac k {B \gamma} \le \frac \epsilon 3$
  as well as
  \[
    3 \gamma \frac{1}{1 - \frac k {B \gamma}} \log k
    \le 3 \gamma \frac{1}{1 - \frac \epsilon 3} \log k
    < 4 \gamma \log k
    = \frac{\epsilon^2}{9}, 
  \]
  which means we can bound \eqref{eq:12} further and obtain
  \begin{align}
    \label{eq:13}
    \widetilde \OPT_{\tilde B} \ge \left(1 - \frac 2 3 \epsilon \right) \widetilde \OPT_B .
  \end{align}
  Lemma~\ref{lem:pre-processing} also says that after pre-processing,
  the maximum singleton marginal gain is
  \begin{align*}
    \tilde M_c
    = \max_{v \in V} \tilde f_c(v \mid \emptyset)
    \le \gamma \OPT
    \le \gamma \widetilde \OPT_B
    \le 2 \gamma \widetilde \OPT_{\tilde B}, 
  \end{align*}
  where the last inequality follows from \eqref{eq:13}. 
  Hence, by Theorem~\ref{thm:multiobjective},
  \begin{align}
    \label{eq:11}
    \tilde f_{c}(S) \ge \left( 1- \frac 1 e - \frac \epsilon 3 \right) \widetilde \OPT_{\tilde B} .
  \end{align}
  with probability at least $1 - \delta$.
  Let us condition on the case that Algorithm~\ref{alg:multiobjective} was
  successful and \eqref{eq:11} holds.
  In this case,
    putting everything together yields
  \begin{align*}
    f_c(T \cup \tilde S)
    &= \tilde f_c(\tilde S) & \textrm{(definition of $\tilde f_c$)} \\
    &\ge \left( 1 - \frac 1 e - \frac \epsilon 3 \right)
      \widetilde \OPT_{\tilde B} & \textrm{(by \eqref{eq:11})} \\
    &\ge \left( 1 - \frac 1 e - \frac \epsilon 3 \right)
      \left(1 - \frac 2 3 \epsilon \right) \widetilde \OPT_B & \textrm{(by \eqref{eq:13})} \\
    &\ge \left( 1 - \frac 1 e - \epsilon \right) \widetilde \OPT_B \\
    &\ge \left( 1 - \frac 1 e - \epsilon \right) \OPT . &
      \textrm{(monotonicity)}
  \end{align*}
\end{proof}

\subsection{Missing Details and Proofs from Section~\ref{sec:lazy}}
\label{sec:appendix-lazy}

\begin{algorithm2e}[t]
  \KwIn{Monotone and submodular set functions $f_c \colon 2^V \to \R$ for
  $c \in C$, partial solution $S \subseteq V$, upper bounds $g_c(v)$ for all
  $v \in V$ and $c \in C$ such that $g_c(v) \ge f_c(v \mid S)$, step size
  $\eta > 0$, and number of iterations $T$}
Initialize $y_c \gets 1 / |C|$ for all $c \in C$ and $\overline x_v \gets 0$ for all $v \in V$\;
\For{$t = 1, 2, \dots, T$}{
  Let $v^* \gets \bot$\;
  \For{$v \in V$ \emph{in order decreasing in} $\sum_{c \in C} y_c g_c(v)$}{
    \If{$v^* \not= \bot$ \emph{and} $\sum_{c \in C} y_c g_c(v) \le \sum_{c \in C} y_c g_c(v^*)$}
    {
      {\bf break}\;
    }
    If not yet computed, evaluate $f_c(v \mid S)$ for all $c \in C$\;
    Update $g_c(v) \gets f_c(v \mid S)$ for all $c \in C$\;
    \If{$v^* = \bot$ \emph{or} $\sum_{c \in C} y_c g_c(v) > \sum_{c \in C} y_c g_c(v^*)$}
    {
      Set $v^* \gets v$\;
    }
  }
  Update $y_c \gets y_c (1 - \eta \ell_c(v^*)) = y_c (1 - \eta (B f_c(v^* \mid S) + f_c(S)))$\;
  Normalize $y_c \gets y_c / \|y\|_1$ \;
  Add $\overline x_{v^*} \gets \overline x_{v^*} + \frac 1 T$\;
}
\Return{$\overline x$ \emph{and the updated upper bounds} $g_c(v)$ \emph{for all} $v \in V$ \emph{and} $c \in C$.}
\caption{Solving the LP with MWU and Lazy Evaluations}
\label{alg:mwu}
\end{algorithm2e}

Let us now describe and motivate the design of Algorithm~\ref{alg:mwu}.
Imagine a single iteration of Algorithm~\ref{alg:multiobjective-simple} where
we have a partial solution $S = S^{(i-1)}$.
We can formulate the LP \eqref{eq:6} as a zero-sum game with
a payoff for each element $v \in V$ and each color $c \in C$ defined as
\begin{align}
  \label{eq:15}
  \ell_c(v) = B f_c(v \mid S) + f_c(S) .
\end{align}
We write our LP in terms of this payoff:
\[
  \mathrm{LP}(S) = \max_{x \in \Delta_V} \min_{c \in C} \left\{
    \sum_{v \in V} x_v \ell_c(v)
    \right\} = 
  \min_{y \in \Delta_C} \max_{v \in V} \left\{
    \sum_{c \in C} y_c \ell_c(v)
    \right\}
\]
where we are able to exchange minimum and maximum as this presents a zero-sum
game \cite{Neumann28}.
We use the min-max formulation where the first player plays a
distribution $y \in \Delta_C$ over colors and the second player responds
with the best response
\[
  v^* = \arg\max_{v \in V} \sum_{c\in C} y_c \ell_c(v)
  = \arg\max_{v \in V}\sum_{c \in C} y_c f_c(v \mid S)
\]
where the equality is true since $f_c(S)$ is constant in $v$.
Since the RHS is just a linear combination of marginal gains, we can now
employ lazy evaluations when finding the maximizer $v^*$: While maintaining
upper bounds $g_c(v) \ge f_c(v \mid S)$ for all $v \in V, c \in C$, it verify
that
$\sum_{c \in C} y_c f_c(v \mid S) \ge \sum_{c \in C} y_c g_c(v)$.
Since the solution $S$ we build increases and marginal gains are decreasing
due to submodularity, we use prior marginal gains for $g_c(v)$, and update
the upper bounds until we can show optimality of $v^*$.

Finally, we understand the colors as experts and, treating
$\ell(v) = (\ell_c(v))_{c \in C} \in \R^C$ as a loss vector, we update
$y$ via multiplicative weight updates.

We now want to bound the required number of iterations $T$. The MWU can only
guarantee an approximate solution to the LP with a distribution
$\overline x \in \Delta_V$, but this is sufficient for our purposes.
Indeed, the following holds:

\begin{lemma}
  \label{lem:mwu-apx}
  Assume that in every iteration $1 \le i \le B$ of
  Algorithm~\ref{alg:multiobjective-simple}, we use an approximate solution
  $\overline x^{(i)}$ with
  \[
    \sum_{v \in V} \overline x_v^{(i)} f_c(v \mid S^{(i)}) \ge
    \frac 1 B ((1 - \epsilon) \OPT - f_c(S^{(i)}))
  \]
  for all $v \in V$ and $S \subseteq V$. Then,
  Algorithm~\ref{alg:multiobjective-simple} still outputs a solution
  $S$ such that $f_c(S) \ge (1 - \frac 1 e - O(\epsilon)) \OPT$ for all
  colors $c \in C$, under the same conditions as
  Theorem~\ref{thm:multiobjective}.
\end{lemma}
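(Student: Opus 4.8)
The plan is to mirror the proof of Theorem~\ref{thm:multiobjective} and track how the multiplicative error $\epsilon$ in the approximate LP solution propagates through the three steps: the per-iteration marginal-gain bound, the unrolling that gives the $(1-1/e)$ guarantee in expectation, and the Azuma concentration step. First I would observe that the hypothesis is exactly the statement of Lemma~\ref{lem:expectation-is-good} but with $\OPT$ replaced by $(1-\epsilon)\OPT$: since $v^{(i)} \sim \overline x^{(i)}$, taking expectations over the sampled element gives
\[
  \E[f_c(v^{(i)} \mid S^{(i-1)}) \mid S^{(i-1)}]
  = \sum_{v \in V} \overline x^{(i)}_v f_c(v \mid S^{(i-1)})
  \ge \frac 1 B\left((1-\epsilon)\OPT - f_c(S^{(i-1)})\right).
\]

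Next I would re-run the unrolling from the proof of Lemma~\ref{lem:expected-value} verbatim, but with $(1-\epsilon)\OPT$ in place of $\OPT$. This rearranges to $\E[(1-\epsilon)\OPT - f_c(S^{(i)}) \mid S^{(i-1)}] \le (1 - 1/B)(\,(1-\epsilon)\OPT - f_c(S^{(i-1)})\,)$, and iterating $B$ times yields $\E[(1-\epsilon)\OPT - f_c(S^{(B)})] \le (1-1/B)^B (1-\epsilon)\OPT \le \frac 1 e (1-\epsilon)\OPT$, i.e.\ $\E[f_c(S^{(B)})] \ge (1 - \frac 1 e)(1-\epsilon)\OPT$. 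Then I would apply Theorem~\ref{thm:azumas} exactly as in the proof of Lemma~\ref{lem:individual-concentration}, now with $\mu = (1-\frac 1 e)(1-\epsilon)\OPT$ and the same $X_i = f_c(v^{(i)} \mid S^{(i-1)})$, $M_c = \max_v f_c(v \mid \emptyset)$. This gives, for any $\epsilon' > 0$,
\[
  \Pr[f_c(S) \le (1-\epsilon')\mu] \le \exp\!\left(-\frac{\epsilon'^2 \mu}{2 M_c}\right).
\]
Choosing $\epsilon' = \epsilon$ and using $(1-\frac 1 e)(1-\epsilon)^2 \ge 1 - \frac 1 e - O(\epsilon)$ (and $(1-\frac 1 e)(1-\epsilon) \ge \frac 1 2$ for small $\epsilon$, so the exponent is at least $\epsilon^2 \OPT / (4 M_c)$ as before) reproduces the per-color concentration bound with $\OPT$ degraded only by a multiplicative $(1-O(\epsilon))$ factor. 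The union bound over the $k$ colors and the $\lceil \log(2/\delta)\rceil$ independent repetitions then go through word for word under the same condition $\OPT \ge \frac{4}{\epsilon^2} M \log(2k)$, giving $f_c(S) \ge (1 - \frac 1 e - O(\epsilon))\OPT$ for all $c$ with probability $\ge 1-\delta$.

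The only subtlety — and the place I would be most careful — is bookkeeping the constants hidden in $O(\epsilon)$: the error enters additively once from the $(1-\epsilon)$ inside the expectation and once more from the concentration slack $\epsilon'$, so the final guarantee is roughly $1 - \frac 1 e - 2\epsilon - \frac{\epsilon^2}{e}$, which is $1 - \frac 1 e - O(\epsilon)$ as claimed; none of this requires the condition on $\OPT$ to change beyond replacing $M$-dependent constants by slightly larger absolute constants. I do not expect any genuine obstacle here — the lemma is a robustness statement, and the entire original analysis was already phrased in terms of a lower bound $\mu$ on $\sum_i \E[X_i \mid \mathcal D_i]$, so substituting the weaker lower bound $(1-\frac 1 e)(1-\epsilon)\OPT$ for $\mu$ is essentially the whole proof.
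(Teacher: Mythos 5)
Your proposal is correct and follows exactly the route the paper intends: the paper's own proof is a one-line remark that one should replace $\OPT$ by $(1-\epsilon)\OPT$ in the proofs of Lemma~\ref{lem:expectation-is-good}, Lemma~\ref{lem:expected-value}, Lemma~\ref{lem:individual-concentration}, and Theorem~\ref{thm:multiobjective}, which is precisely the substitution you carry out (with more care about the constants than the paper itself provides).
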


\begin{proof}
  The proof is simple and follows, for example, by replacing
  $\OPT$ by $(1 - \epsilon) \OPT$ in the proof of
  Theorem~\ref{thm:multiobjective} and related lemmas.
\end{proof}

We therefore need to show that using sufficiently many iterations, the
multiplicative weights update can obtain an approximation $\overline x$
that satisfies the condition of Lemma~\ref{lem:mwu-apx}.

We use the typical regret guarantee (for example, see~\citet{Arora+12}) for
MWU, which states that
\begin{align}
  \label{eq:14}
  \min_{y \in \Delta_C} \sum_{t=1}^T \langle \ell(v^{(t)}), y \rangle
  \ge \sum_{t=1}^T \langle \ell(v^{(t)}), y^{(t)} \rangle -
    \frac{\log k}{\eta} - \eta \sum_{t=1}^T \|\ell(v^{(t)})\|_\infty^2 
\end{align}
where $y^{(t)}$ is the weight vector $y$ at the beginning of iteration $t$,
$\ell(v) = (\ell_c(v))_{c \in C} \in \R^C$ the loss vector as defined
in~\eqref{eq:15}, and $v^{(t)}$ the best
response to $y^{(t)}$.

\begin{restatable}{lemma}{mwuiter}
  \label{lem:mwu-iter}
  Using $T = 16 B^2 M^2 \frac{\log k}{\epsilon^2 \OPT^2}$ and an appropriate
  choice of $\eta>0$, the solution
  $\overline x \in \Delta_V$ is such that for all sets $S \subseteq V$
  with $|S| \le B$,
  \[
    \sum_{v \in V} \overline x_v f_c(v \mid S) \ge \frac 1 B \left(
    (1 - \epsilon) \OPT - f_c(S) \right) .
  \]
\end{restatable}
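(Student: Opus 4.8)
The plan is to instantiate the standard multiplicative-weights analysis for zero-sum games and to use the structure of $\mathrm{LP}(S)$ to control the regret terms. Fix any $S \subseteq V$ with $|S| \le B$ — this is the partial solution passed to Algorithm~\ref{alg:mwu}, and the bound on $|S|$ holds because the outer loop of Algorithm~\ref{alg:multiobjective-simple} runs at most $B$ times — and recall the payoffs $\ell_c(v) = B f_c(v \mid S) + f_c(S)$ from~\eqref{eq:15}. First I would observe that the lazy best-response search in Algorithm~\ref{alg:mwu} is exact: when the inner loop breaks, the running maximizer $v^*$ genuinely maximizes $\sum_{c \in C} y_c f_c(v \mid S)$ over $v \in V$, since every element that could overtake it has had its upper bounds $g_c$ refreshed to the true marginals beforehand. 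Hence the element $v^{(t)}$ played in round $t$ may be treated as an exact best response to the current weights $y^{(t)} \in \Delta_C$. Because $\overline x$ is the empirical distribution of $v^{(1)}, \dots, v^{(T)}$, for every color $c$ we have $\sum_{v \in V} \overline x_v \ell_c(v) = \frac 1 T \sum_{t=1}^T \ell_c(v^{(t)})$, and therefore
\[
  \min_{c \in C} \sum_{v \in V} \overline x_v \ell_c(v)
  = \frac 1 T \min_{y \in \Delta_C} \sum_{t=1}^T \langle \ell(v^{(t)}), y \rangle ,
\]
so it is enough to lower bound the right-hand side via the regret guarantee~\eqref{eq:14}.

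The two ingredients I would feed into~\eqref{eq:14} are the following. First, \emph{the per-round play is at least $\OPT$}: by the minimax theorem and the LP formulation, $\min_{y \in \Delta_C} \max_{v \in V} \sum_{c \in C} y_c \ell_c(v) = \mathrm{LP}(S) = \xi^* \ge \OPT$, where the inequality $\xi^* \ge \OPT$ is exactly Lemma~\ref{lem:opt-is-good}. Since $v^{(t)}$ is a best response to $y^{(t)}$, this gives $\langle \ell(v^{(t)}), y^{(t)} \rangle = \max_{v \in V}\sum_{c} y^{(t)}_c \ell_c(v) \ge \OPT$ for every $t$, hence $\sum_{t=1}^T \langle \ell(v^{(t)}), y^{(t)} \rangle \ge T \cdot \OPT$. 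Second, \emph{the loss vectors are bounded}: since $|S| \le B$, submodularity and monotonicity give $f_c(S) \le \sum_{u \in S} f_c(u \mid \emptyset) \le B M$ (taking $f_c(\emptyset) = 0$, as is standard), while $f_c(v \mid S) \le f_c(v \mid \emptyset) \le M$, so $\|\ell(v^{(t)})\|_\infty = \max_{c \in C}\bigl(B f_c(v^{(t)} \mid S) + f_c(S)\bigr) \le 2 B M$.

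Plugging these into~\eqref{eq:14} and dividing by $T$ gives
\[
  \min_{c \in C} \sum_{v \in V} \overline x_v \ell_c(v)
  \ge \OPT - \frac{\log k}{\eta T} - 4 \eta B^2 M^2 .
\]
I would then pick $\eta = \frac{1}{2 B M}\sqrt{\log k / T}$, which balances the two error terms and makes them sum to $\frac{4 B M \sqrt{\log k}}{\sqrt T}$ (this $\eta$ also keeps all weights positive, since $T \ge \log k$: indeed $T / \log k = 16 B^2 M^2 / (\epsilon^2 \OPT^2) \ge 1$ because $\OPT \le B M$ and $\epsilon \le 1$). With $T = 16 B^2 M^2 \frac{\log k}{\epsilon^2 \OPT^2}$ that error term equals $\epsilon \OPT$, so $\min_{c \in C}\sum_v \overline x_v \ell_c(v) \ge (1 - \epsilon)\OPT$. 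Finally, expanding $\ell_c(v) = B f_c(v \mid S) + f_c(S)$ and using $\sum_v \overline x_v = 1$, the bound $B \sum_v \overline x_v f_c(v \mid S) + f_c(S) \ge (1 - \epsilon)\OPT$ rearranges to the claimed inequality for every $c \in C$.

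The step I expect to be the main obstacle — or at least the one that is not pure bookkeeping — is the first ingredient: recognizing that, regardless of which weight vector $y^{(t)}$ the MWU currently holds, the element player's best response already attains value at least $\xi^* \ge \OPT$. This is what upgrades the generic $O(1/\sqrt T)$ regret bound into a genuine per-color approximation guarantee, and it is precisely where Lemma~\ref{lem:opt-is-good} is used; the remaining pieces (minimax duality, the $\ell_\infty$ bound from submodularity, the choice of $\eta$, and the arithmetic pinning down $T$) are standard.
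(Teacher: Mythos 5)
Your proof is correct and follows essentially the same route as the paper's: the MWU regret bound~\eqref{eq:14}, a lower bound of $T\cdot\OPT$ on the cumulative per-round payoff via minimax duality and Lemma~\ref{lem:opt-is-good}, the $2BM$ bound on $\|\ell\|_\infty$, and a choice of $\eta$ balancing the two error terms against $T$. The only (immaterial) differences are that you lower-bound each round's payoff by $\OPT$ directly rather than passing through the averaged weight vector $\overline y$, and that you bound $f_c(S)\le BM$ by subadditivity where the paper uses $f_c(S)\le\OPT$; your version of the latter step is in fact the cleaner one.
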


\begin{proof}
  Let $\overline y = \frac 1 T \sum_{t=1}^T y^{(t)} \in \Delta_C$.
  Note first that
  \[
    \frac 1 T \sum_{t=1}^T \langle \ell(v^{(t)}), y^{(t)} \rangle
    = \frac 1 T \sum_{t=1}^T \max_{v \in V} \langle \ell(v), y^{(t)} \rangle
    \ge \max_{v \in V} \frac 1 T \sum_{t=1}^T \langle \ell(v), y^{(t)} \rangle
    = \max_{v \in V} \langle \ell(v), \overline y \rangle
    \ge \OPT .
  \]
  As such, the regret guarantee~\eqref{eq:14} implies
  \begin{align}
    \label{eq:16}
    \min_{c \in C} \frac 1 T \sum_{t=1}^T \ell_c(v^{(t)}) =
    \min_{y \in \Delta_C} \frac 1 T \sum_{t=1}^T \langle \ell(v^{(t)}), y \rangle
    \ge \OPT -
    \frac{\log k}{T \eta} - \frac \eta T \sum_{t=1}^T \|\ell(v^{(t)})\|_\infty^2. 
  \end{align}
  Since
  $\ell_c(v^{(t)}) = B f_c(v \mid S) + f_c(S) \le B M + \OPT \le 2 B M$
  it suffices to set
  \[
    \eta = \frac{\epsilon}{8 B^2 M} \le \frac{\epsilon \OPT}{8 B^2 M^2} \quad \textrm{and} \quad
    T = 16 \frac{\log k B^2}{\epsilon^2} \ge 16 \frac{B^2 M^2 \log k}{\epsilon^2 \OPT^2}
      \]
  so we can further bound~\eqref{eq:16} by $(1 - \epsilon)\OPT$.
  Therefore, we have for all $c\in C$ that
  \[
    \sum_{v \in V} \overline x_v (B f_c(v \mid S) + f(S))
    = \frac 1 T \sum_{t=1}^T \ell_c(v^{(t)}) \ge (1 - \epsilon) \OPT = \frac{\log k}{\epsilon \eta \OPT}, 
  \]
  where the equality is
  by definition of $\overline x$ and the loss $\ell(v)$.
  Since $\overline x \in \Delta_V$,
  this is equivalent to the statement we wanted to show, so we are done.
\end{proof}

\runningtime*

\begin{proof}
  The pre-processing takes time $O(k n)$ since it has at most one function
  evaluation per color $c \in C$ and element $v \in V$.
          Algorithm~\ref{alg:multiobjective} runs through a total
  of $O(B \log(1 / \delta))$ iterations:
  $\log(2 / \delta)$ iterations of the outer loop and
  $B$ iterations of the inner loop.
  To solve the LP in each iteration,
  we need the marginal gains for all colors $c \in C$
  and elements $v \in V$, which thus requires $O(k n)$ function evaluations.
  We can solve the LP approximately which, as shown in
  Lemma~\ref{lem:mwu-iter}, requires
  $
    T = O(B^2 M \frac{\log k}{\epsilon^2 \OPT})
    = O(B^2 \frac 1 {\epsilon^2} \log k)
  $
  iterations.
              Each iteration of the multiplicative weights update involves at most $n k$
  function evaluations. Overall, we thus have a running time of
  $O(n B^3 \frac 1 {\epsilon^2} k \log(k) \log(1 / \delta))$ and
  $O(n B k)$ function evaluations.
\end{proof}

\section{Further Experimental Results}

\subsection{Max-$k$-Coverage}
\label{sec:appendix-max-coverage}

\begin{figure}
\centering
\includegraphics[width=0.5\linewidth]{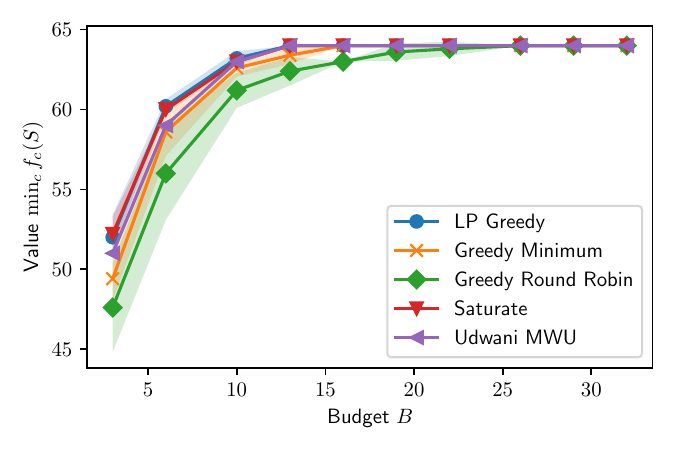}~
\includegraphics[width=0.5\linewidth]{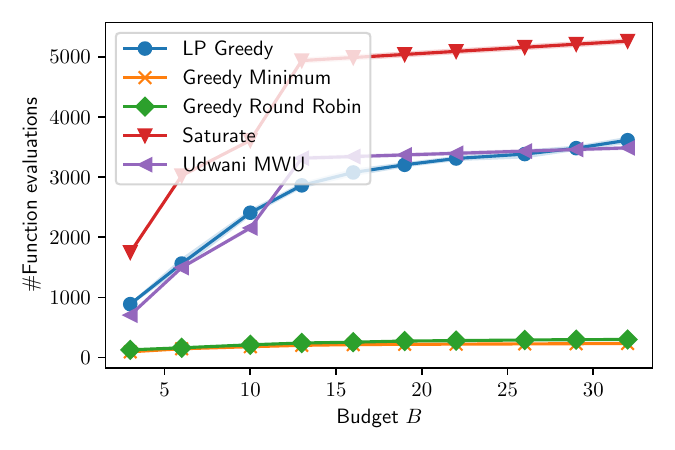}
\caption{
  Multiobjective submodular maximization for max-$k$-cover. We use $k=20$
  Barab\'asi-Albert graphs on $n=64$ nodes. We show the function value (top) and
  the number of evaluations (bottom). We report mean and standard deviation
  over 5 runs.
}
\label{fig:coverage-ba}
\end{figure}

\begin{figure}
\centering
\includegraphics[width=0.5\linewidth]{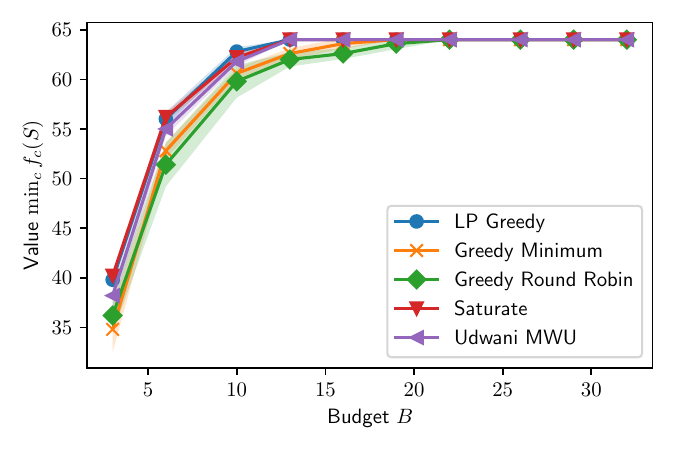}~
\includegraphics[width=0.5\linewidth]{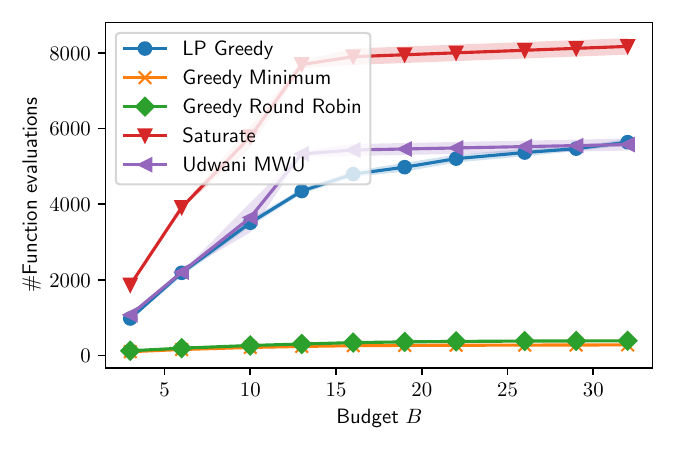}
\caption{
  Multiobjective submodular maximization for max-$k$-cover. We use $k=20$
  Erd\H{o}s-R\'enyi graphs on $n=64$ nodes. We show the function value (top) and
  the number of evaluations (bottom). We report mean and standard deviation
  over 5 runs.
}
\label{fig:coverage-gnp}
\end{figure}

\begin{figure}
\centering
\includegraphics[width=0.5\linewidth]{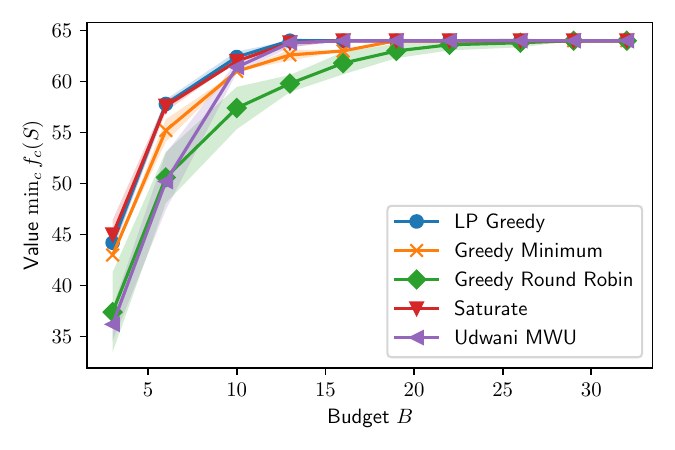}~
\includegraphics[width=0.5\linewidth]{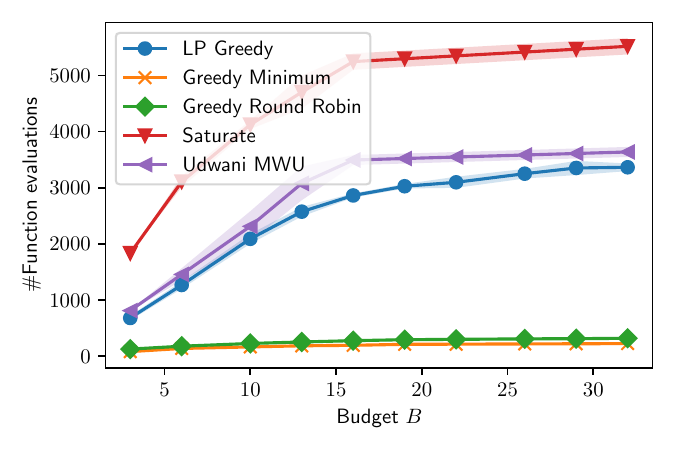}
\caption{
  Multiobjective submodular maximization for max-$k$-cover. We use $k=20$
  Barab\'asi-Albert graphs on $n=64$ nodes for varying $d$. We show the function
  value (top) and the number of evaluations (bottom). We report mean and
  standard deviation over 5 runs.
}
\label{fig:coverage-ba-hard}
\end{figure}

\begin{figure}
\centering
\includegraphics[width=0.5\linewidth]{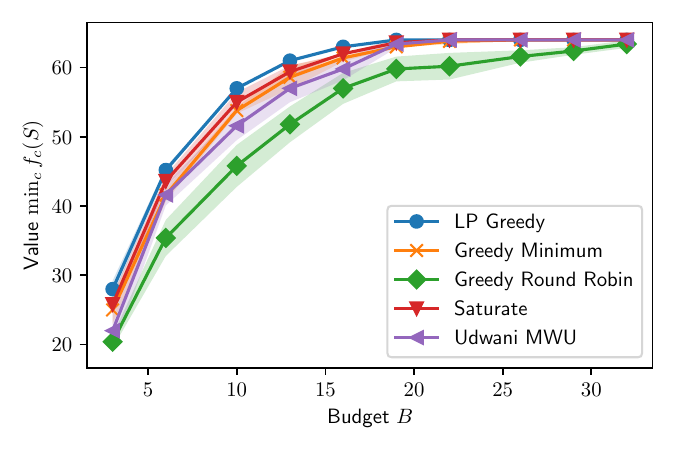}~
\includegraphics[width=0.5\linewidth]{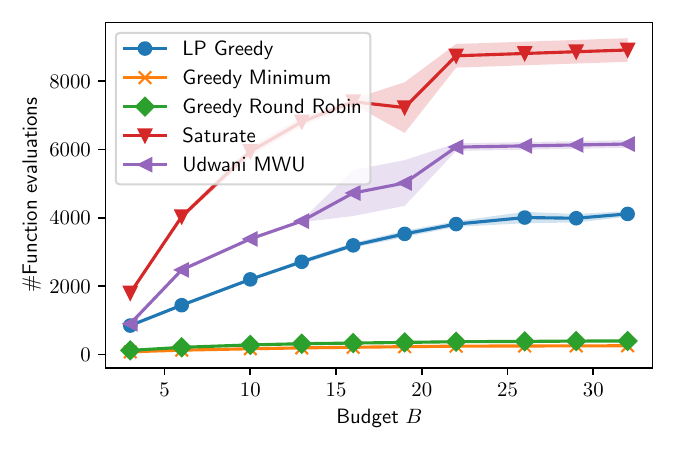}
\caption{
  Multiobjective submodular maximization for max-$k$-cover. We use $k=20$
  Erd\H{o}s-R\'enyi graphs on $n=64$ nodes for varying $p$. We show the function
  value (top) and the number of evaluations (bottom). We report mean and
  standard deviation over 5 runs.
}
\label{fig:coverage-gnp-hard}
\end{figure}

Figures~\ref{fig:coverage-ba} and \ref{fig:coverage-gnp} show our results for
Barab\'asi-Albert and Erd\H{o}s-R\'enyi random graphs, respectively. We obtain a
preferential-attachment graph in the Barab\'asi-Albert model by iteratively
connecting each node to $d=5$ existing nodes, with probability proportional to
their degrees. We obtain a random graph in the Erd\H{o}s-R\'enyi model by including
each pair as an edge with probability $0.1$.

Additionally, we create more difficult synthetic instances where the graph
properties differ per color. For Erd\H{o}s-R\'enyi random graphs, we create an
instance where we use $p_c = 0.1 + \frac c {50} \in [0.1, 0.5]$ for colors
$1 \le c \le 20$ to generate the $c$-th graph.
For Barab\'asi-Albert graphs, we vary the number of
$d_c = \lceil 5 + \frac c 2 \rfloor$ for colors $1 \le c \le 20$.
Our results in
Figures~\ref{fig:coverage-ba-hard} and \ref{fig:coverage-gnp-hard} show
that the our algorithm performs better under imbalance compared to heuristics
such as \textsc{Saturate}.

\subsection{Fair Centrality Maximization}
\label{sec:appendix-fair-centrality-max}

\begin{figure}
\centering
\includegraphics[width=0.5\linewidth]{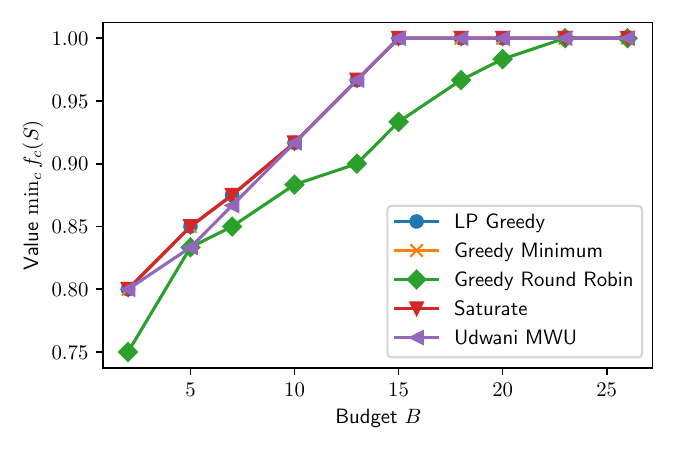}~
\includegraphics[width=0.5\linewidth]{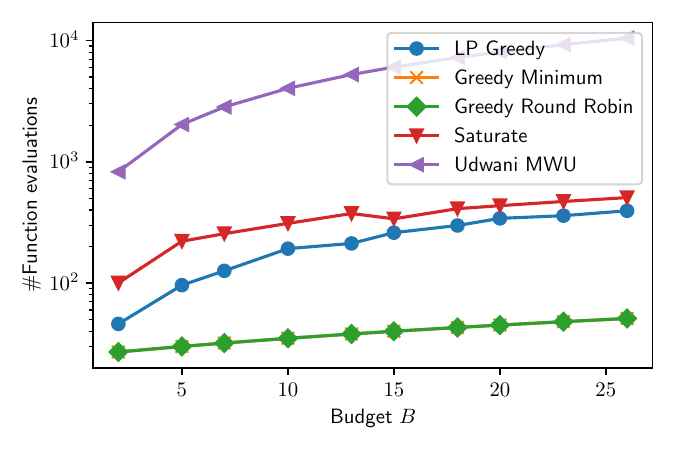}
\caption{
  Fair centrality maximization on the Amazon co-purchasing graph
  \emph{Movies \& TV} with $n=23$ nodes and $k=2$ colors.
}
\label{fig:fcm1}
\end{figure}

\begin{figure}
\centering
\includegraphics[width=0.5\linewidth]{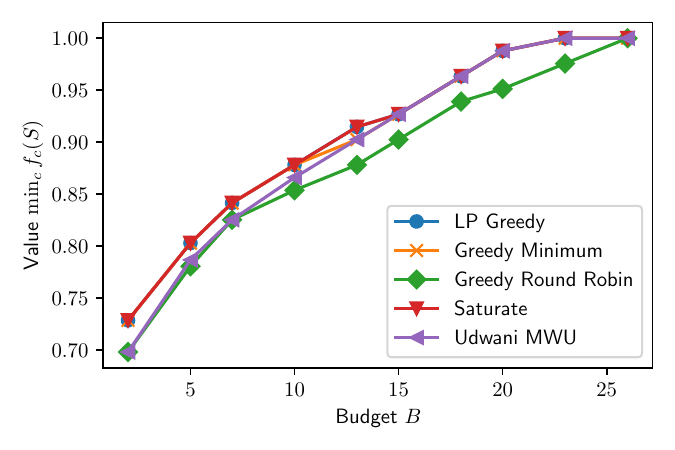}~
\includegraphics[width=0.5\linewidth]{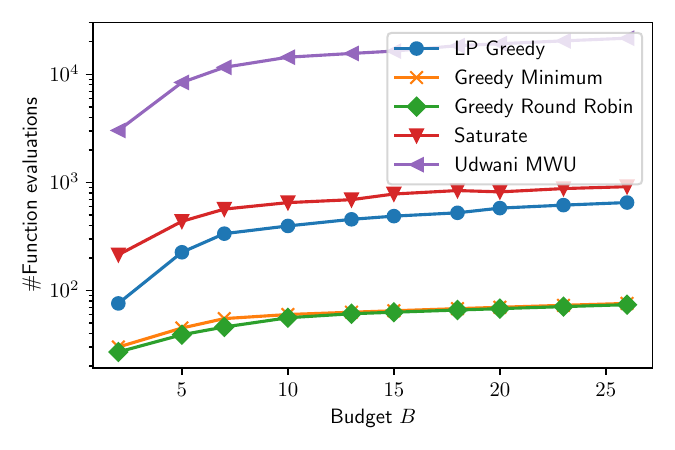}
\caption{
  Fair centrality maximization on the Amazon co-purchasing graph
  \emph{Musical Instruments} with $n=46$ nodes and $k=2$ colors.
}
\label{fig:fcm2}
\end{figure}

\begin{figure}
\centering
\includegraphics[width=0.5\linewidth]{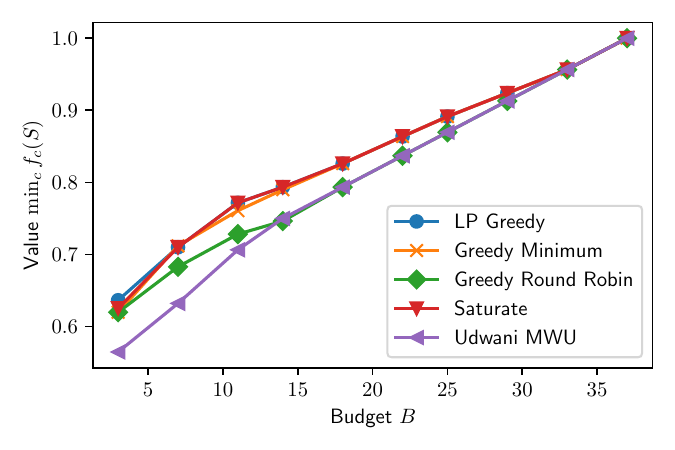}~
\includegraphics[width=0.5\linewidth]{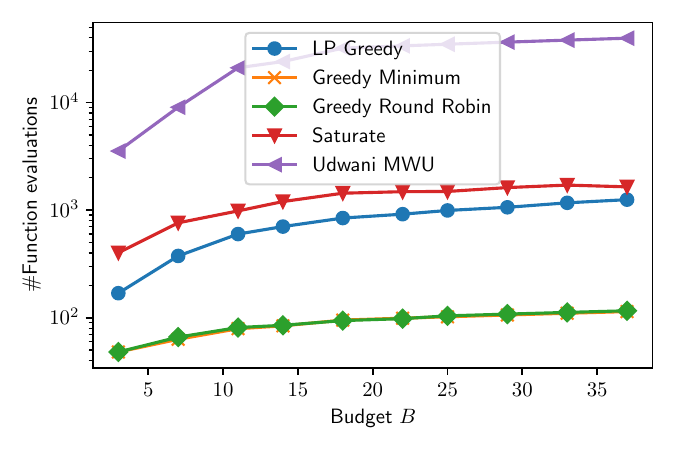}
\caption{
  Fair centrality maximization on the Amazon co-purchasing graph
  \emph{All Electronics} with $n=47$ nodes and $k=2$ colors.
}
\label{fig:fcm3}
\end{figure}

\begin{figure}
\centering
\includegraphics[width=0.5\linewidth]{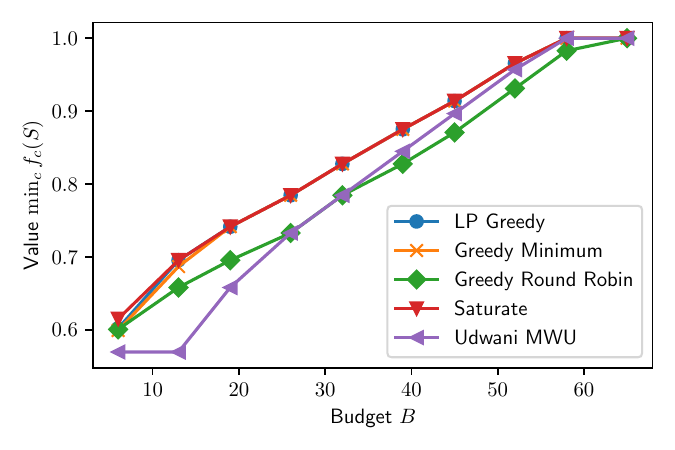}~
\includegraphics[width=0.5\linewidth]{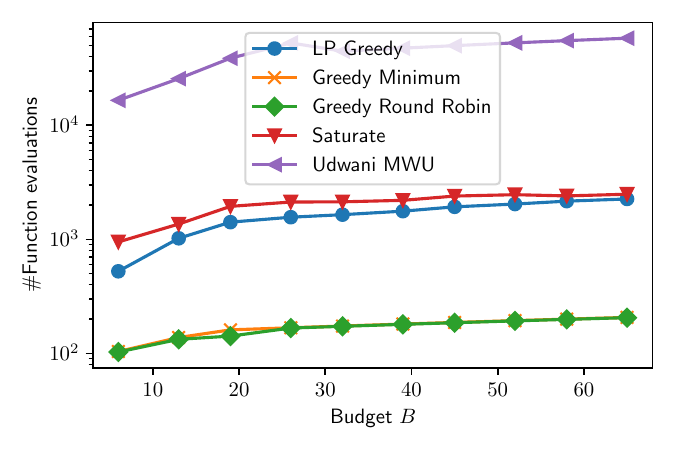}
\caption{
  Fair centrality maximization on the Amazon co-purchasing graph
  \emph{Computers} with $n=59$ nodes and $k=2$ colors.
}
\label{fig:fcm4}
\end{figure}

\begin{figure}
\centering
\includegraphics[width=0.5\linewidth]{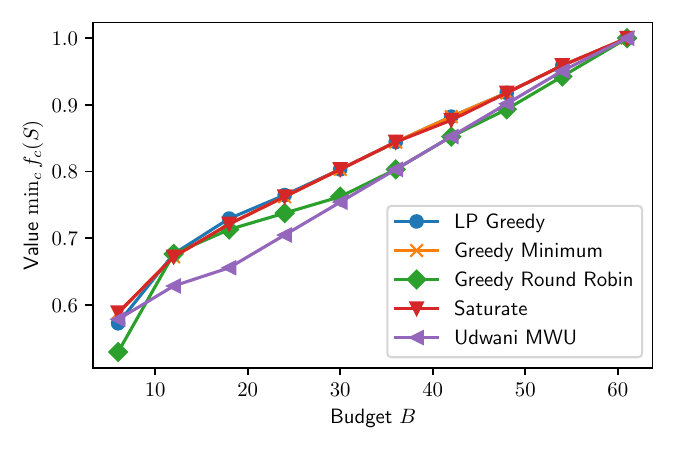}~
\includegraphics[width=0.5\linewidth]{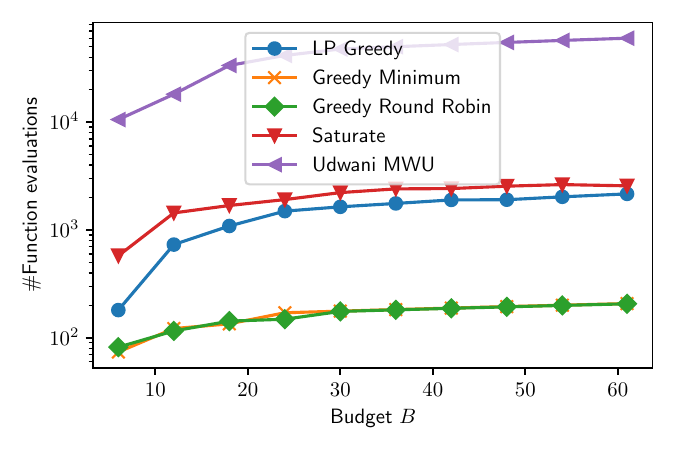}
\caption{
  Fair centrality maximization on the Amazon co-purchasing graph
  \emph{Home Audio \& Theater} with $n=77$ nodes and $k=2$ colors.
}
\label{fig:fcm5}
\end{figure}

\begin{figure}
\centering
\includegraphics[width=0.5\linewidth]{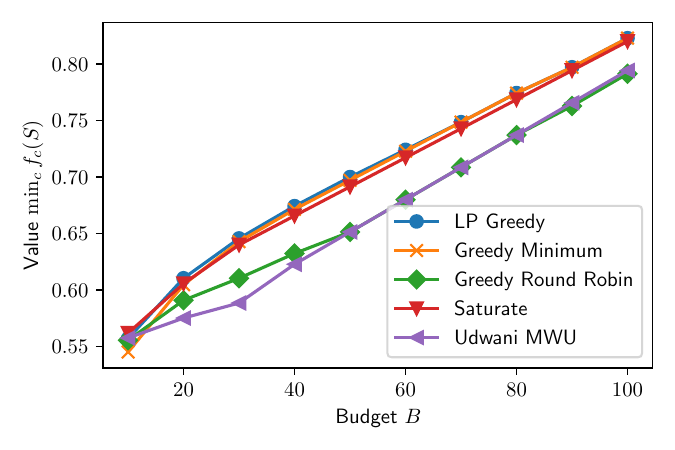}~
\includegraphics[width=0.5\linewidth]{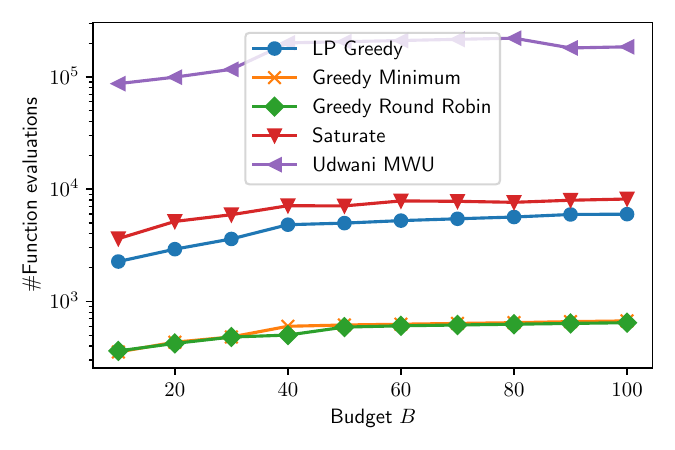}
\caption{
  Fair centrality maximization on the Amazon co-purchasing graph
  \emph{Camera \& Photo} with $n=202$ nodes and $k=2$ colors.
}
\label{fig:fcm6}
\end{figure}

\begin{figure}
\centering
\includegraphics[width=0.5\linewidth]{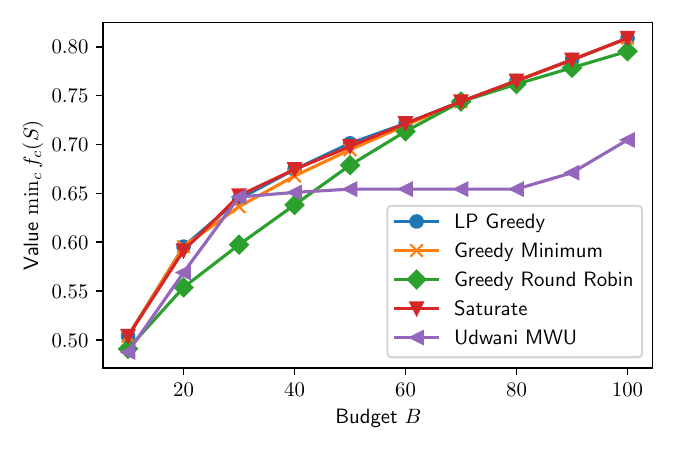}~
\includegraphics[width=0.5\linewidth]{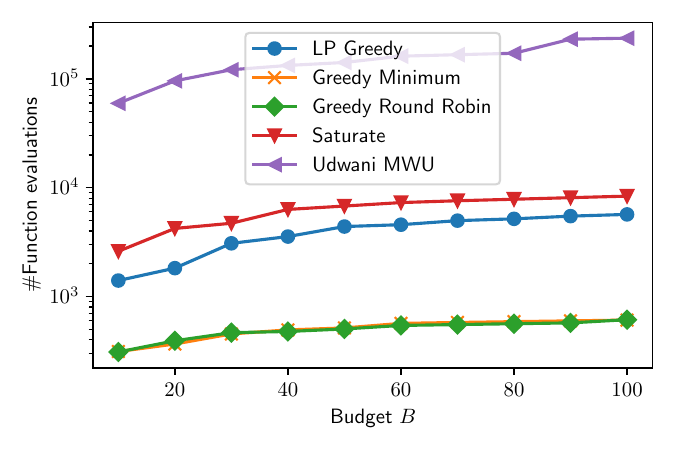}
\caption{
  Fair centrality maximization on the Amazon co-purchasing graph
  \emph{Baby} with $n=228$ nodes and $k=2$ colors.
}
\label{fig:fcm7}
\end{figure}

\begin{figure}
\centering
\includegraphics[width=0.5\linewidth]{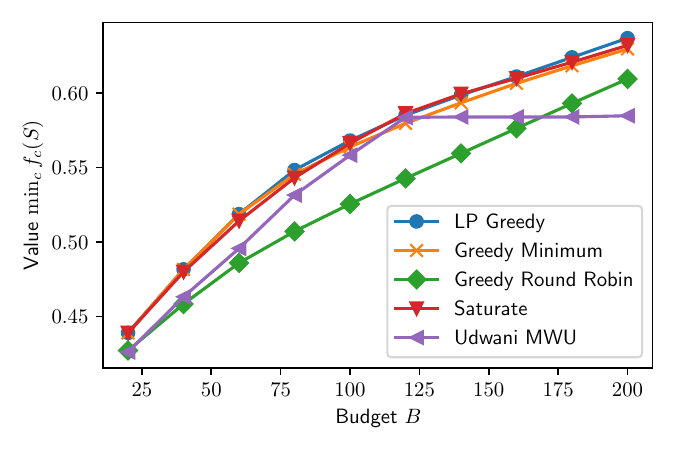}~
\includegraphics[width=0.5\linewidth]{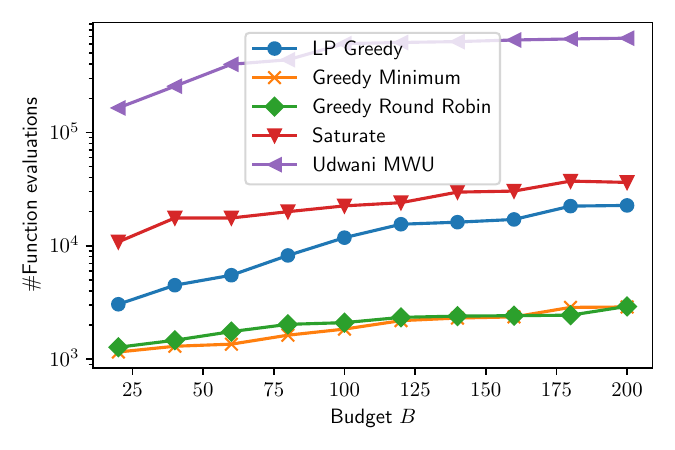}
\caption{
  Fair centrality maximization on the Amazon co-purchasing graph
  \emph{Luxury Beauty} with $n=1037$ nodes and $k=2$ colors.
}
\label{fig:fcm8}
\end{figure}

\begin{figure}
\centering
\includegraphics[width=0.5\linewidth]{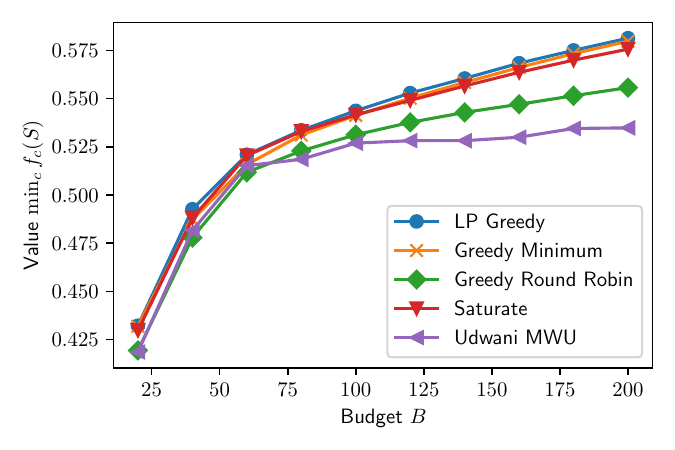}~
\includegraphics[width=0.5\linewidth]{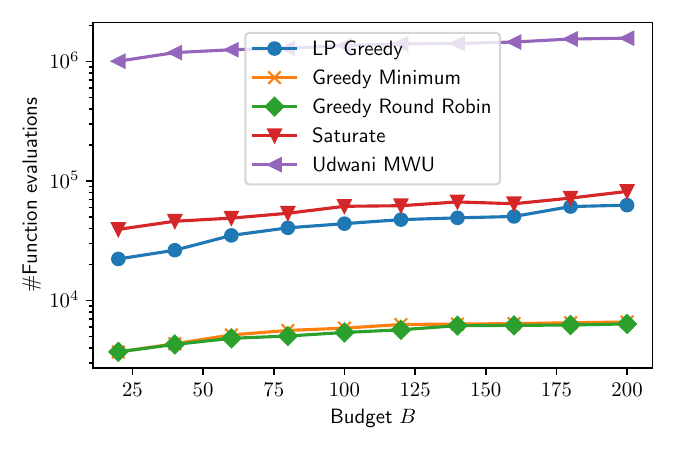}
\caption{
  Fair centrality maximization on the Amazon co-purchasing graph
  \emph{Pet Supplies} with $n=1785$ nodes and $k=2$ colors.
}
\label{fig:fcm9}
\end{figure}

\begin{figure}
\centering
\includegraphics[width=0.5\linewidth]{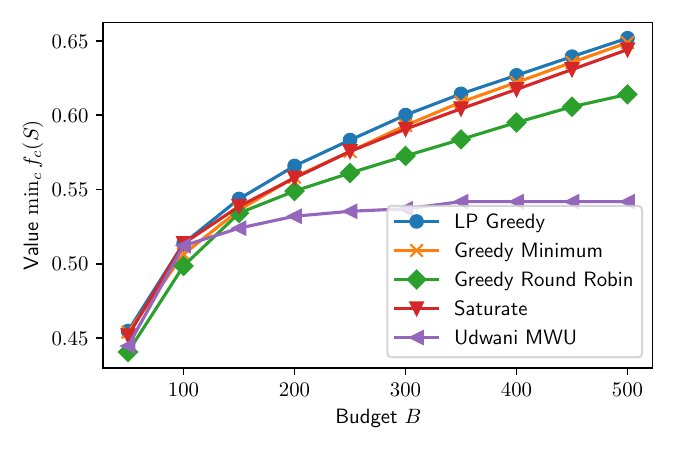}~
\includegraphics[width=0.5\linewidth]{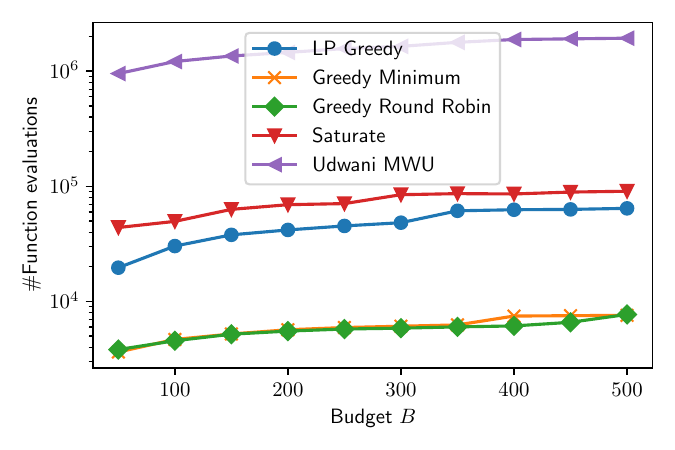}
\caption{
  Fair centrality maximization on the Amazon co-purchasing graph
  \emph{Industrial \& Scientific} with $n=2005$ nodes and $k=2$ colors.
}
\label{fig:fcm10}
\end{figure}

\begin{figure}
\centering
\includegraphics[width=0.5\linewidth]{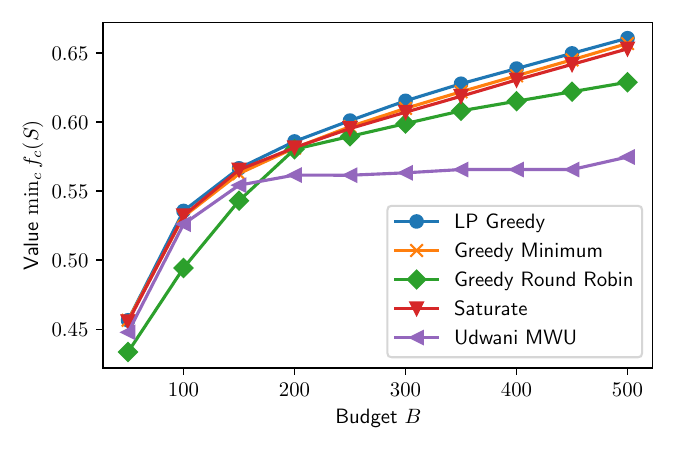}~
\includegraphics[width=0.5\linewidth]{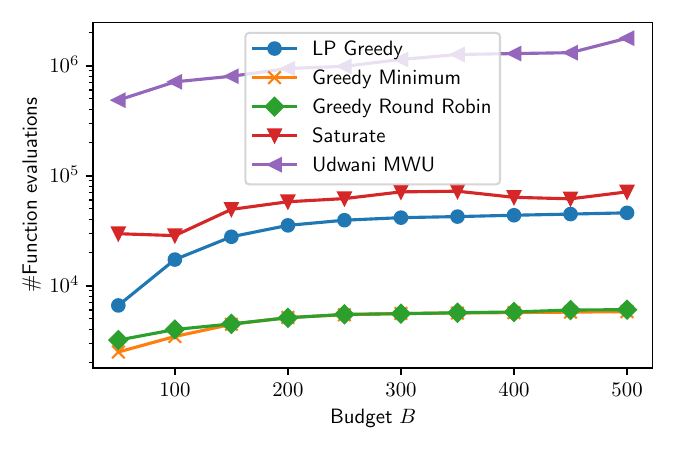}
\caption{
  Fair centrality maximization on the Amazon co-purchasing graph
  \emph{Office Products} with $n=2281$ nodes and $k=2$ colors.
}
\label{fig:fcm11}
\end{figure}

\begin{figure}
\centering
\includegraphics[width=0.5\linewidth]{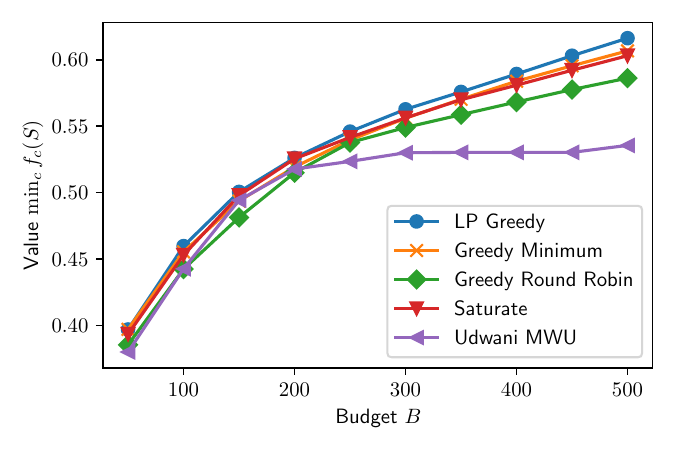}~
\includegraphics[width=0.5\linewidth]{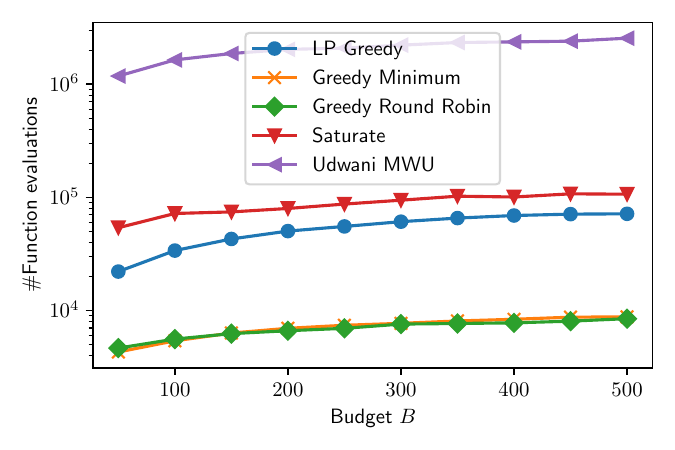}
\caption{
  Fair centrality maximization on the Amazon co-purchasing graph
  \emph{Books} with $n=2495$ nodes and $k=2$ colors.
}
\label{fig:fcm12}
\end{figure}

\begin{figure}
\centering
\includegraphics[width=0.5\linewidth]{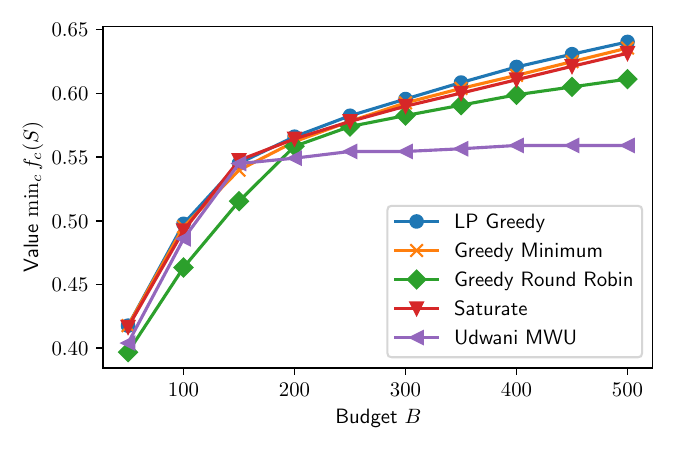}~
\includegraphics[width=0.5\linewidth]{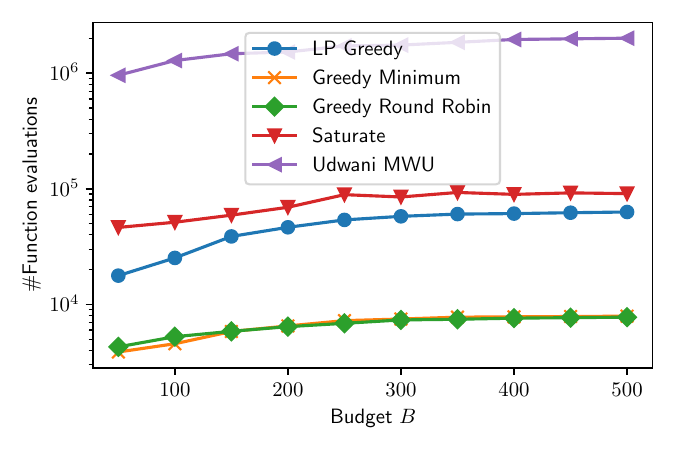}
\caption{
  Fair centrality maximization on the Amazon co-purchasing graph
  \emph{Home Improvement} with $n=2565$ nodes and $k=2$ colors.
}
\label{fig:fcm13}
\end{figure}

\begin{figure}
\centering
\includegraphics[width=0.5\linewidth]{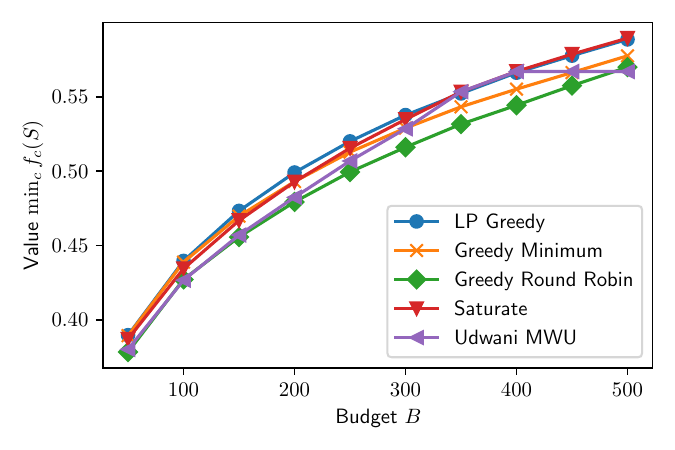}~
\includegraphics[width=0.5\linewidth]{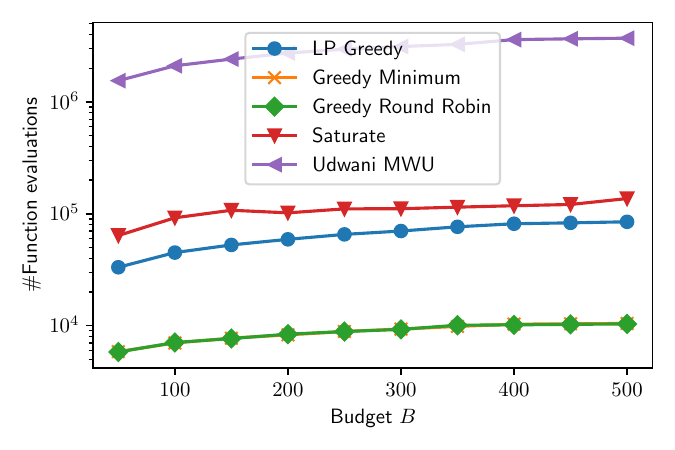}
\caption{
  Fair centrality maximization on the Amazon co-purchasing graph
  \emph{Health \& Personal Care} with $n=3010$ nodes and $k=2$ colors.
}
\label{fig:fcm14}
\end{figure}

\begin{figure}
\centering
\includegraphics[width=0.5\linewidth]{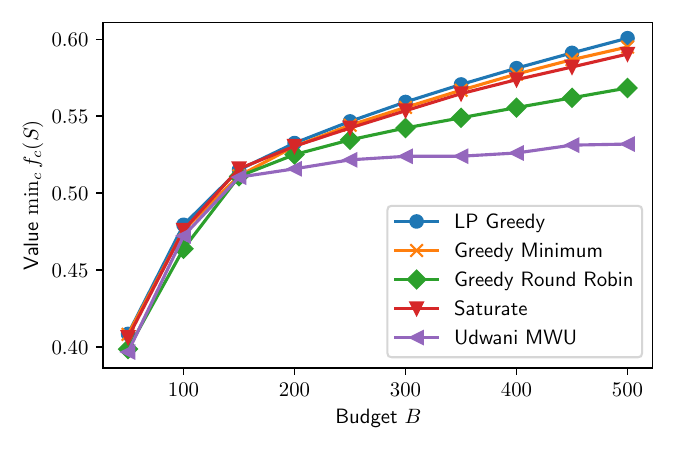}~
\includegraphics[width=0.5\linewidth]{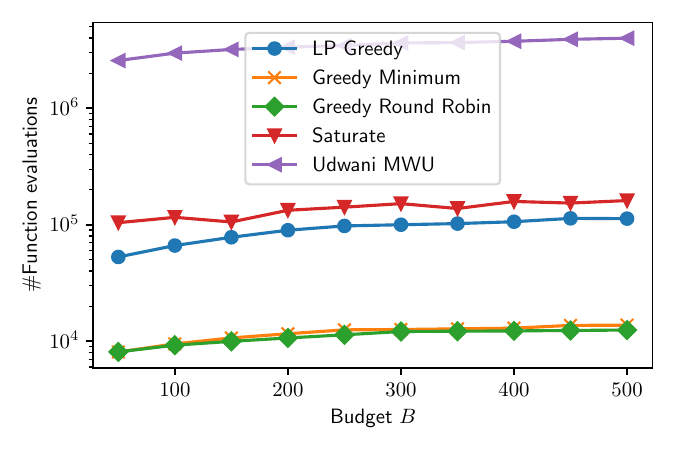}
\caption{
  Fair centrality maximization on the Amazon co-purchasing graph
  \emph{Sports \& Outdoors} with $n=3214$ nodes and $k=2$ colors.
}
\label{fig:fcm15}
\end{figure}

\begin{figure}
\centering
\includegraphics[width=0.5\linewidth]{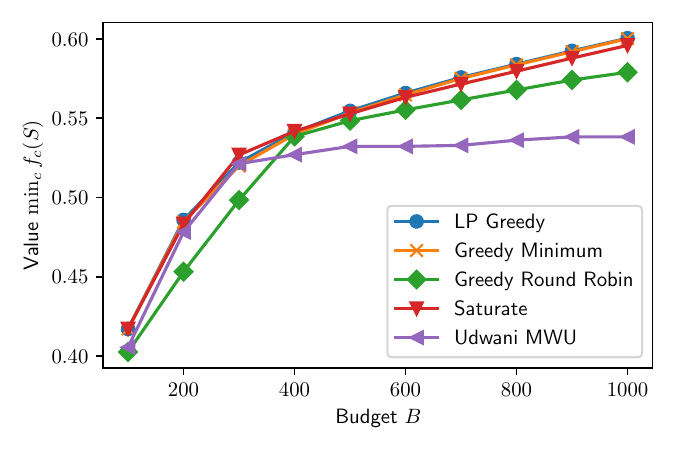}~
\includegraphics[width=0.5\linewidth]{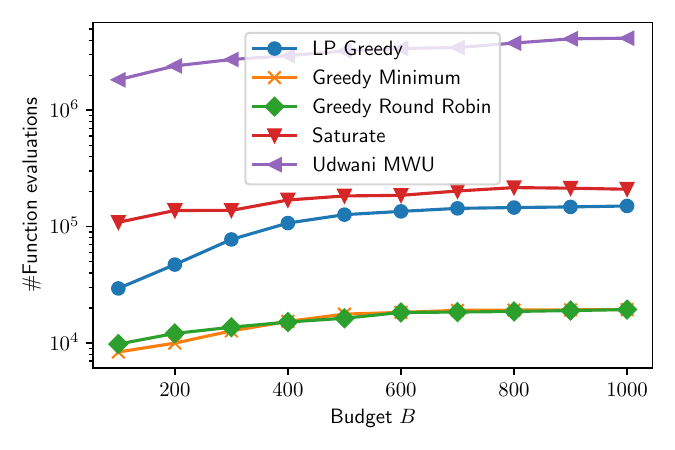}
\caption{
  Fair centrality maximization on the Amazon co-purchasing graph
  \emph{Grocery} with $n=6433$ nodes and $k=2$ colors.
}
\label{fig:fcm17}
\end{figure}

\begin{figure}
\centering
\includegraphics[width=0.5\linewidth]{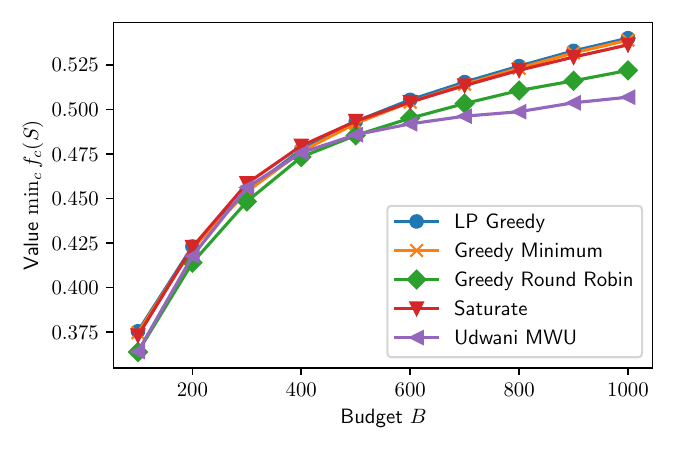}~
\includegraphics[width=0.5\linewidth]{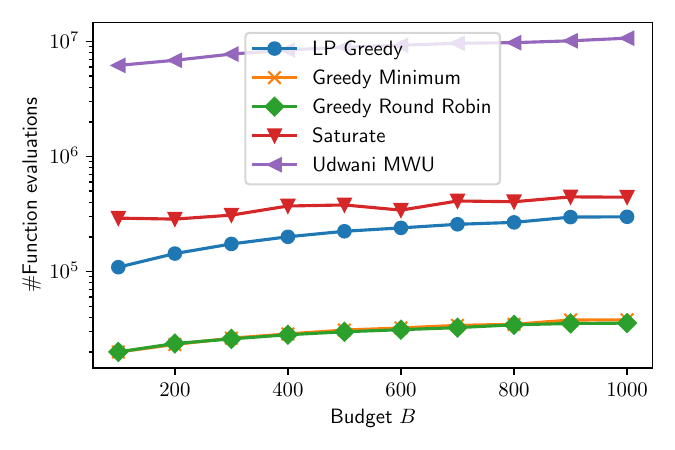}
\caption{
  Fair centrality maximization on the Amazon co-purchasing graph
  \emph{Amazon Home} with $n=10378$ nodes and $k=2$ colors.
}
\label{fig:fcm18}
\end{figure}

Figures~\ref{fig:fcm1} through \ref{fig:fcm18} show omitted results on Amazon
co-purchasing graphs.

\subsection{Fair Influence Maximization}
\label{sec:appendix-fair-influence-max}

\begin{figure}
\centering
  \includegraphics[width=0.5\linewidth]{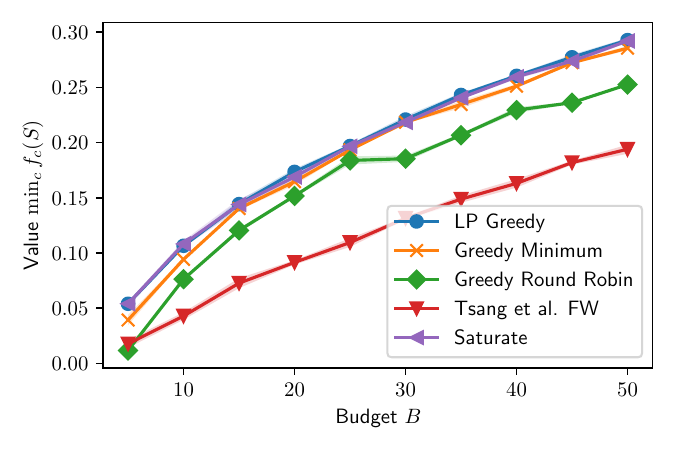}~
  \includegraphics[width=0.5\linewidth]{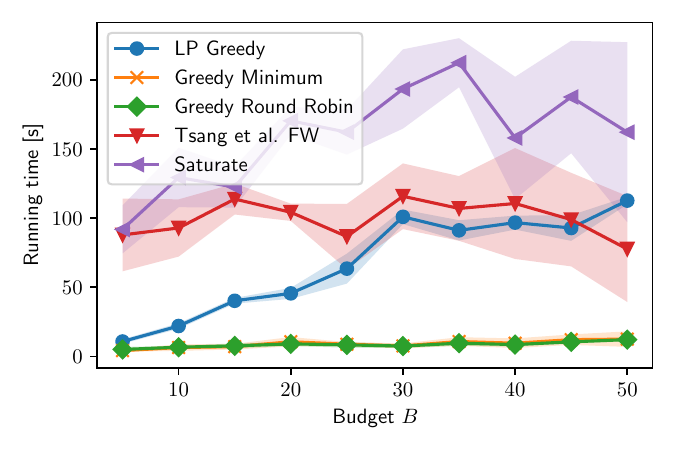}
\caption{
  Fair influence maximization on a simulated Antelope Valley network with attribute
  \emph{age} on $k=7$ colors.
}
\label{fig:infmax-age}
\end{figure}

\begin{figure}
\centering
  \includegraphics[width=0.5\linewidth]{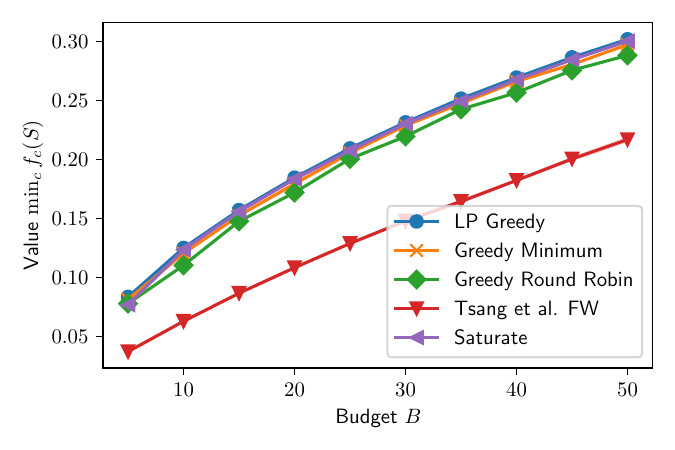}~
  \includegraphics[width=0.5\linewidth]{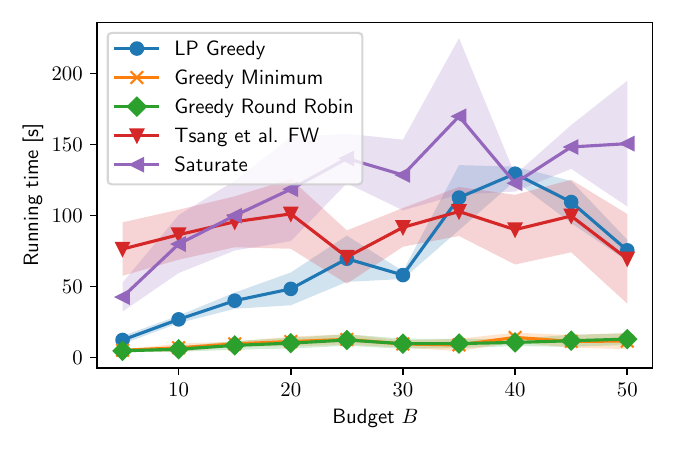}
\caption{
  Fair influence maximization on a simulated Antelope Valley network with attribute
  \emph{gender} on $k=2$ colors.
}
\label{fig:infmax-gender}
\end{figure}

\begin{figure}
\centering
  \includegraphics[width=0.5\linewidth]{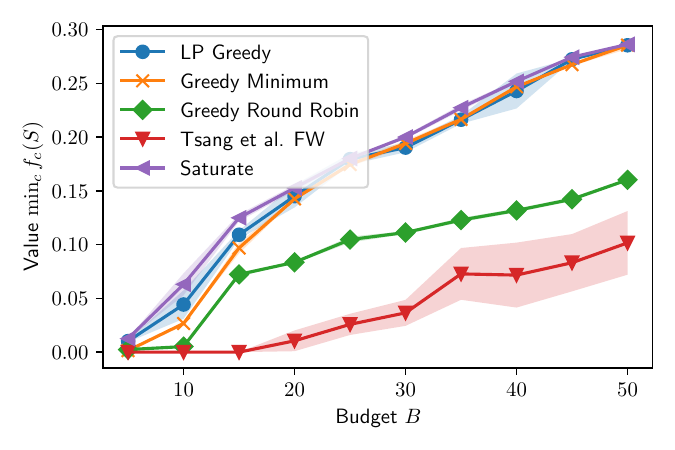}~
  \includegraphics[width=0.5\linewidth]{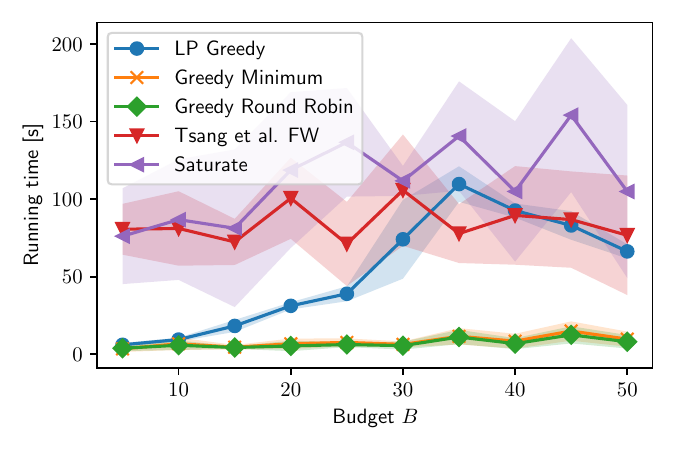}
\caption{
  Fair influence maximization on a simulated Antelope Valley network with attribute
  \emph{region} on $k=13$ colors.
}
\label{fig:infmax-region}
\end{figure}

\begin{figure}
\centering
  \includegraphics[width=0.5\linewidth]{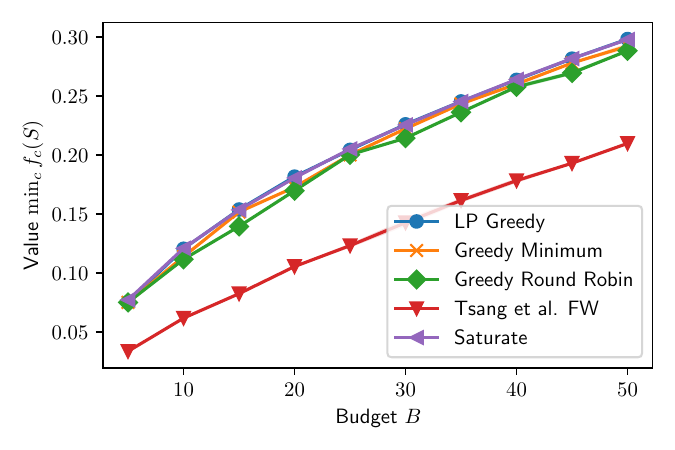}~
  \includegraphics[width=0.5\linewidth]{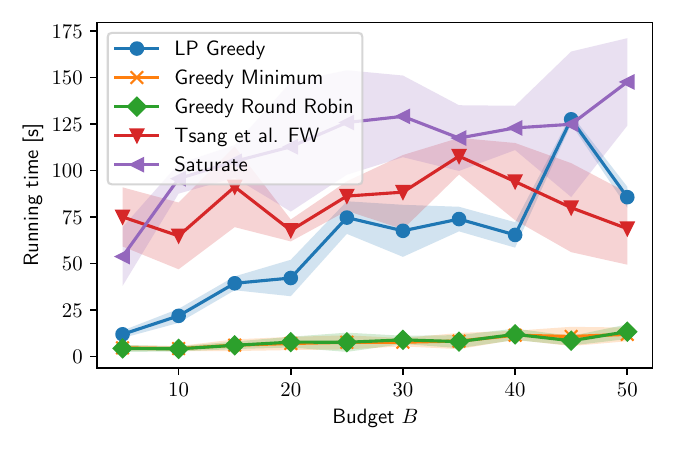}
\caption{
  Fair influence maximization on a simulated Antelope Valley network with attribute
  \emph{status} on $k=3$ colors.
}
\label{fig:infmax-status}
\end{figure}

Figures~\ref{fig:infmax-age} through \ref{fig:infmax-status} show omitted
results on an Antelope Valley network for four remaining attributes.

\end{document}